\def\BibTeX{{\rm B\kern-.05em{\sc i\kern-.025em b}\kern-.08em
    T\kern-.1667em\lower.7ex\hbox{E}\kern-.125emX}}
\renewenvironment{proof}[1][\proofname]{\par
  \pushQED{\qed}%
  \normalfont \topsep6\p@\@plus6\p@\relax
  \trivlist
  \item[\hskip\labelsep
        {\bfseries\itshape
    #1:}\@addpunct{}]\ignorespaces
}{%
  \popQED\endtrivlist\@endpefalse
}
\newtheoremstyle{boldhead}
  {} % Space above
  {} % Space below
  {\normalfont} % Body font - non-italic
  {} % Indent amount
  {\bfseries\itshape} % Theorem head font - bold and italic
  {.} % Punctuation after theorem head
  { } % Space after theorem head
  {\thmname{#1}\thmnumber{ #2}\thmnote{ \normalfont(#3)}} % Theorem head spec (for definitions and remarks)
\theoremstyle{bolditalichead}
\newtheorem{theorem}{Theorem}
\newtheorem{problem}{Problem}
\newtheorem{assumption}{Assumption}
\newtheorem{definition}{Definition}
\newtheorem{remark}{Remark}
\newtheorem{lemma}{Lemma}
\def\BState{\State\hskip-\ALG@thistlm}
\definecolor{blue}{rgb}{0, 0.1, 0.7}
\definecolor{blue}{rgb}{0,0,0}
\begin{document}

\title{Lyapunov-based Resilient Secondary Synchronization Strategy of AC Microgrids Under Exponentially Energy-Unbounded FDI Attacks}

\scriptsize % Make authors and affiliations smaller
\author{
\IEEEauthorblockN{1\textsuperscript{st} Mohamadamin Rajabinezhad~\IEEEmembership{Student Member,~IEEE}}
\IEEEauthorblockA{\textit{Dept. of Electrical and Computer Engineering} \\
\textit{University of Connecticut},
Storrs, CT, USA \\
Mohamadamin.rajabinezhad@uconn.edu}
\vspace{-0.35cm} % Adjust space between authors
\and 
\IEEEauthorblockN{2\textsuperscript{nd} Nesa Shams}
\IEEEauthorblockA{\textit{Dept. of Electrical and Computer Engineering} \\
\textit{University of Connecticut},
Storrs, CT, USA\\
sln24004@uconn.edu}
\vspace{-0.35cm} % Adjust space between authors
\and
\IEEEauthorblockN{3\textsuperscript{rd} Asad Ali Khan~\IEEEmembership{Member,~IEEE}}
\IEEEauthorblockA{\textit{Dept. of Electrical and Computer Engineering} \\
\textit{University of Texas at San Antonio}, San Antonio, TX, USA\\
asad.khan@my.utsa.edu}
\vspace{-0.35cm} % Adjust space between authors
\and 
\IEEEauthorblockN{4\textsuperscript{th} Omar A. Beg~\IEEEmembership{Senior Member,~IEEE}}
\IEEEauthorblockA{\textit{Dept. of Electrical and Computer Engineering} \\
\textit{University of Texas Permian Basin}, Odessa, TX, USA\\
beg\_o@utpb.edu} 
\vspace{-0.35cm} % Adjust space between authors
\and 
\IEEEauthorblockN{5\textsuperscript{th} Shan Zuo~\IEEEmembership{Member,~IEEE}}
\IEEEauthorblockA{\textit{Dept. of Electrical and Computer Engineering} \\
\textit{University of Connecticut},
Storrs, CT, USA \\
shan.zuo@uconn.edu}
}

\normalsize % Reset to normal font size for the rest of the document

% Add negative space between the author block and abstract
\IEEEaftertitletext{\vspace{-0.5cm}} % Adjust this value as needed
\linespread{0.95} % Reduce spacing for a smaller appearance
\maketitle

{\color{blue}

\begin{abstract}

This article presents fully distributed Lyapunov-based attack-resilient secondary control strategies for islanded inverter-based AC microgrids, designed to counter a broad spectrum of energy-unbounded False Data Injection (FDI) attacks, including exponential attacks, targeting control input channels. While distributed control improves scalability and reliability, it also increases susceptibility to cyber threats. The proposed strategies, supported by rigorous Lyapunov-based proofs, ensure uniformly ultimately bounded (UUB) convergence for frequency regulation, voltage containment, and power sharing, even under severe cyber attacks. The effectiveness of the proposed approach has been demonstrated through case studies on a modified IEEE 34-bus system, leveraging simulations and real-time Hardware-in-the-Loop experiments with OPAL-RT.

%The enhanced resilient performance of the proposed cyber-physical defense strategies has been verified through comprehensive case studies on a modified IEEE 34-bus test feeder benchmark system incorporating four inverter-based Distributed Energy Resources (DERs).

\end{abstract}
}
\begin{IEEEkeywords}
Distributed resilient
secondary control, FDI unbounded attacks, AC microgrids, Containment control.           
\end{IEEEkeywords}

\section{Introduction}
AC microgrids in islanded mode typically follow a hierarchical control structure with primary, secondary, and tertiary levels. Distributed control at the secondary level enhances reliability, scalability, and communication efficiency \cite{zuo2022resilient}. However, incorporating information and communication technology increases vulnerability to cyber attacks due to limited global situational awareness \cite{zuo2022resilient,liang2016review}. Severe attacks can go undetected in real-time, making cybersecurity crucial, especially given the low frequency stability and scarce defense resources in isolated microgrids \cite{alhelou2023power}. Common attacks like replay, denial-of-service (DoS), and false data injection (FDI) can disrupt sensor readings, control inputs, and communication networks and affecting synchronization. Given that attack-detection methods may struggle against stealthy attackers \cite{liang2016review}, enhancing the self-resilience of large-scale networked microgrids with attack-resilient control protocols is essential. These distributed protocols maintain performance by mitigating disturbances and attacks without detecting compromised components, focusing on local solutions for resilience \cite{xia2024distributed,zuo2020resilient,zhou2023distributed,wang2024secondary,liu2021robust,shi2021observer,liu2023resilient}. Ref \cite{shi2021observer} proposes a resilient control method that improves conventional distributed control by adding compensation terms based on errors between neighboring frequency and active power signals. Ref \cite{liu2021robust} presents a robust and resilient distributed optimal frequency control for AC microgrids by integrating the cyber-physical system with an auxiliary communication network layer. Most AC microgrid studies treat disturbances, faults, or attacks as bounded signals. However, recent research highlights unbounded false data injections, exploiting quantum computing's capabilities to target various components of cybersystems, maximizing damage and posing severe threats to microgrid stability, especially in islanded systems \cite{zuo2020resilient,zhou2023distributed,wang2024secondary,wiebe2023exponential,lakshmi2023quantum}. Traditional defenses mechanism may be insufficient against these complex attacks.

In this paper, we tackle the practical yet challenging problem of cooperative resilient secondary defence strategy in AC microgrids under a wide range of unbounded attacks, including exponential energy-unbounded FDI (EU-FDI) attacks. Unlike prior studies that primarily address either bounded attacks or so-called unbounded attacks with bounded first derivatives \cite{liu2023resilient,zuo2020resilient,zhou2023distributed}, our methodology relaxes these constraints. Unbounded attacks that target the control input and influence the rate of change of controlled variables can induce rapid fluctuations in these variables before they reach their physical saturation limits, potentially destabilizing the system. This underscores the pressing need for robust defense strategies to ensure microgrid stability amidst sophisticated cyber threats, especially in the emerging quantum area. The contributions of this paper are summarized as follows:

% $\bullet$ We propose fully distributed attack-resilient defense strategies for secondary frequency and voltage control in AC microgrids. A compensational signal is designed to counteract unbounded cyber-physical attacks, including exponential FDI attacks, using an adaptively tuned parameter based on neighborhood information. Unlike existing solutions \cite{liu2023resilient, zuo2020resilient, zhou2023distributed}, which address limited unbounded attack signals with bounded first-order derivatives, our strategies extend our previous work on polynomially unbounded attacks \cite{wang2024secondary} to handle a broader range of threats, including exponential FDI attacks enhancing microgrid defense against malicious attacks.
% $\bullet$ A rigorous proof based on Lyapunov stability analysis proves that the proposed resilient control strategy achieves UUB convergence for frequency regulation, voltage containment, and power sharing, even under exponentially unbounded attacks. The defense strategies mitigate the effects of these attacks, and the ultimate bounds of frequency and voltage can be adjusted by tuning the adaptation gains. 
$\bullet$ We propose fully distributed, attack-resilient defense strategies for secondary frequency and voltage control in AC microgrids. Our approach utilizes a compensational signal designed to counteract unbounded cyber-physical attacks, including EU-FDI, through an adaptively tuned parameter based on neighborhood information. Unlike existing solutions \cite{liu2023resilient, zuo2020resilient, zhou2023distributed}, which handle limited unbounded attack signals with bounded first-order derivatives, our strategies expand upon previous work \cite{wang2024secondary} to address a wider range of threats, enhancing microgrid defenses against malicious attacks.

$\bullet$ A rigorous proof based on Lyapunov stability analysis demonstrates that that the proposed cyber-physical resilient secondary control ensures UUB convergence for frequency regulation, voltage containment, and power sharing, even under exponentially unbounded attacks. 
% The fully distributed defense strategies require no global information, enabling scalability and plug-and-play capability, as demonstrated through case studies on a modified IEEE 34-bus system using simulations and real-time HIL experiments with OPAL-RT.

$\bullet$ The proposed defense strategies are fully distributed, requiring no global information, ensuring scalability and plug-and-play capability. Their effectiveness has been demonstrated through case studies on a modified IEEE 34-bus system using simulations and real-time hardware-in-the-loop experiments with OPAL-RT.
% \begin{figure}[!t]
% \centering
% \includegraphics[width=2.5in]{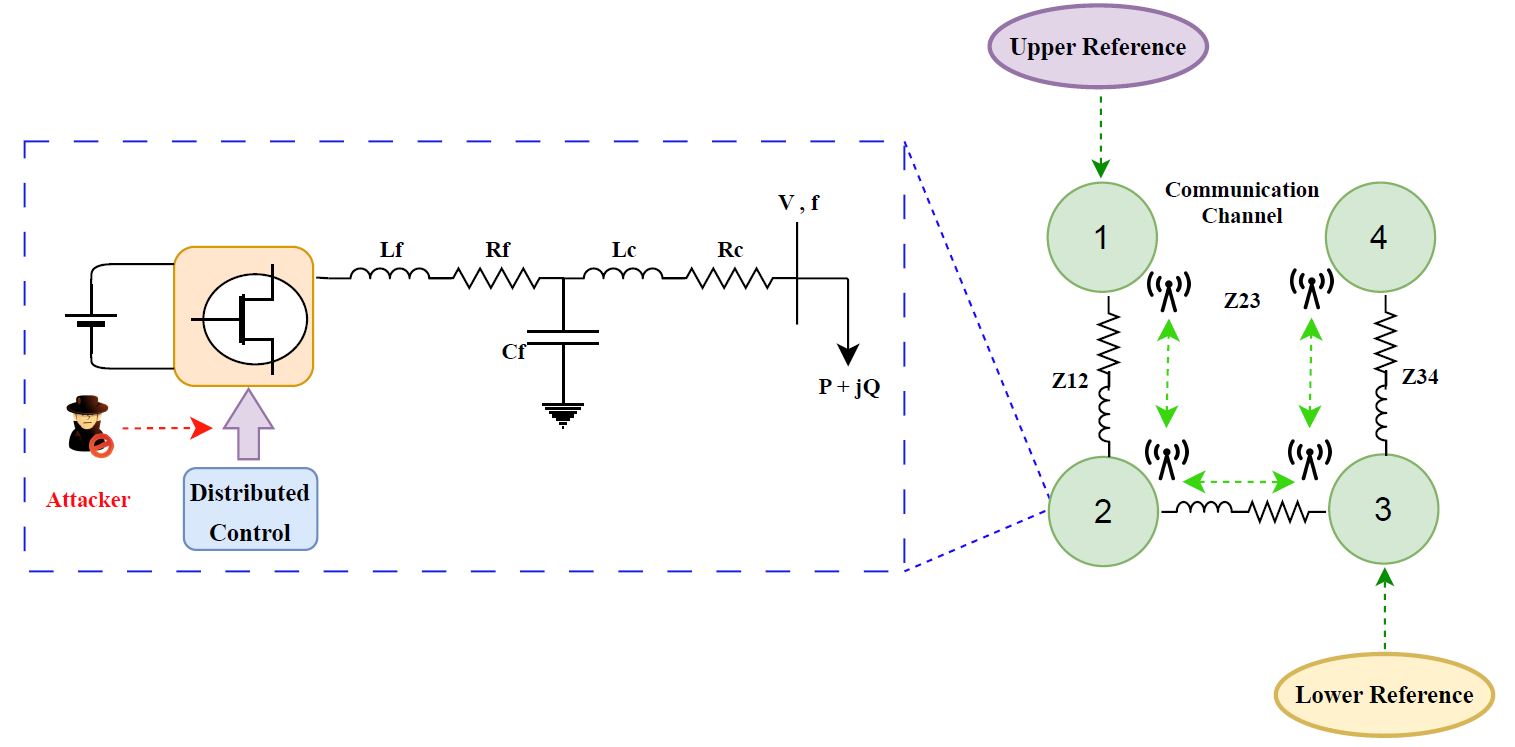}
% \caption{A networked multi-inverter system under actuator attacks.}
% \label{FIG2}
% \end{figure}
\section{Preliminaries on Sparse Communication Network}

\subsection{Notation and Graph Theory}
The minimum and maximum singular values, \( \sigma_{\min}(\cdot) \) and \( \sigma_{\max}(\cdot) \), denote a matrix's smallest and largest singular values. Sets \( \mathscr{F} = \{1, 2, \dots, N\} \) and \( \mathscr{L} = \{N+1, N+2\} \) represent follower and leader nodes, respectively, with \( \mathbf{1}_N \) as an all-ones vector. Operators \( \otimes \), \( \operatorname{diag}(\cdot) \), and \( \| \cdot \| \) indicate the Kronecker product, block diagonal matrix, and Euclidean norm. A network with \( N \) inverters and two leader nodes is modeled by the digraph \( \mathscr{G} = (\mathcal{V}, \mathcal{E}, \mathcal{A}) \), where leaders issue reference values, followers communicate via adjacency matrix \( \mathcal{A} = [a_{ij}] \), and node interactions are defined by the Laplacian \( \mathcal{L} = \mathcal{D} - \mathcal{A} \). The pinning gain \( g_{ir} \) indicates leader influence on followers, with \( \mathcal{G}_r = \operatorname{diag}(g_{ir}) \) as the pinning matrix.

\section{Cooperative Control of AC Microgrids}
An inverter-based distributed generation (DG) system consists of a Voltage Source Inverter (VSI) along with internal power, voltage, and current controllers designed to oversee and regulate the terminal voltage and operating frequency of the DG. The primary control level is the local control of DGs, which typically employs droop control techniques. These techniques govern the frequency of DGs by adjusting active power and regulate the voltage magnitude by managing reactive power. The primary droop mechanism for the \(i\)th inverter is expressed as follows:
\begin{align}
& \omega_i = \omega_{n_i}-m_{P_i}P_i,
\label{eq1}\\
& v_{odi} = {V_{n_i}} - {n_{Q_i}}{Q_i},
\label{eq2}
\end{align}
where $P_i$ and $Q_i$ are the active and reactive powers, respectively. $\omega _i$ and $v_{odi}$ are the operating angular frequency and the $d$ component of in $abc$ to $dq0$ transform (park transform) of inverter terminal voltage, respectively. ${m_{P_i}}$ and ${n_{Q_i}}$ are $P-\omega$ and $Q-v$ droop coefficients selected per inverters' power ratings. ${\omega _{n_i}}$ and ${V_{n_i}}$ are the setpoints for the primary droop mechanisms fed from the secondary control layer. The secondary control is to restore the operating frequency and terminal voltage magnitude of DGs to the reference frequency and voltage. Standard secondary control functions as an actuator, supplying input control signals to adjust setpoints in decentralized primary control. We differentiate the droop relations in \eqref{eq1} and \eqref{eq2} with respect to time to obtain
% {\small
\begin{align}
\small
& {{\dot \omega }_{n_i}}={{\dot \omega }_i} + {m_{P_i}}{{\dot P}_i}=u_{f_i},
\label{eq3}\\
& {{\dot V}_{n_i}}={{\dot v}_{odi}} + {n_{Q_i}}{{\dot Q}_i}=u_{v_i},
\label{eq4}
\end{align}

where \( u_{f_i} \) and \( u_{v_i} \) are auxiliary control inputs to be designed later. To synchronize each inverter's terminal frequency and maintain voltage within acceptable limits, we adopt a leader-follower containment-based secondary control \cite{zuo2020resilient}. The local
cooperative frequency and voltage control protocols at each
inverter will be designed based on the following relative
information with respect to the neighboring inverters and the
leaders
% \begin{equation}
% \begin{gathered}
%   {u_{{f_i}}} = {c_{f_i}}\left( {\sum\limits_{j \in \mathscr{F}} {a_{ij}\left( {{\omega _j} - {\omega _i}} \right)} } \right. + \sum\limits_{k \in \mathscr{L}} {g_{ik}\left( {{\omega _k} - {\omega _i}} \right)}  \hfill \\
%   \left. {\quad \quad  + \sum\limits_{j \in \mathscr{F}} {a_{ij}\left( {{m_{{P_j}}}{P_j} - {m_{{P_i}}}{P_i}} \right)} } \right), \hfill \\ 
% \end{gathered}
% \label{eq5}
% \end{equation}
% \begin{equation}
% \begin{gathered}
%   {u_{{v_i}}} = {c_{v_i}}\left( {\sum\limits_{j \in \mathscr{F}} {a_{ij}\left( {v _{odj} - v _{odi}} \right)} } \right. + \sum\limits_{k \in \mathscr{L}} {g_{ik}\left( {{v _k} - v _{odi}} \right)}  \hfill \\
%   \left. {\quad \quad  + \sum\limits_{j \in \mathscr{F}} {a_{ij}\left( {{n_{{Q_j}}}{Q_j} - {n_{{Q_i}}}{Q_i}} \right)} } \right), \hfill \\ 
% \end{gathered}
% \label{eq6}
% \end{equation}
% {\small
\begin{align}
\scriptsize
{{\dot \omega }_{{n_i}}} &= {c_{f_i}}\left( {\sum\limits_{j \in \mathscr{F}} {a_{ij}\left( {{\omega _{{n_j}}} - {\omega _{{n_i}}}} \right)}  + \sum\limits_{k \in \mathscr{L}} {g_{ik}\left( {{\omega _{n_k}} - {\omega _{{n_i}}}} \right)} } \right)\label{eq9}
\end{align}
\begin{equation}
\small
{{\dot V}_{{n_i}}} = {c_{v_i}}\left( {\sum\limits_{j \in \mathscr{F}} {a_{ij}\left( {{V_{{n_j}}} - {V_{{n_i}}}} \right)}  + \sum\limits_{k \in \mathscr{L}} {g_{ik}\left( {V_{n_k} - {V_{{n_i}}}} \right)} } \right)
\label{eq11}
\end{equation}

where $c_{f_i}$ and $c_{v_i}$ are constant gains. The setpoints for the primary-level droop control, $\omega_{n_i}$ and ${V_{n_i}}$, are, then, computed from $u_{f_i}$ and $u_{v_i}$ as ${\omega _{n_i}} = \int {u_{f_i}} \operatorname{d} t,{V _{n_i}} = \int {u_{v_i}} \operatorname{d} t.$ where ${\omega _{{n_k}}}={\omega _k} + {m_{{P_i}}}{P_i}$ and $V_{n_k}={v_k + {n_{{Q_i}}}{Q_i}}$. While the control protocols include power-sharing mechanisms, leading to synchronization of the frequency and voltage of each inverter in the steady state. Define ${\Phi _k} = \frac{1}{2}{\mathcal{L}} + \mathcal{G}_k$. Then, the global forms of \eqref{eq9} and \eqref{eq11} are
% {\small
\begin{align}
    {\dot \omega }_n & = {\xi _{{f}}} \equiv - \operatorname{diag} \left( {{c_{{f_i}}}} \right) \sum\limits_{k \in \mathscr{L}} {{\Phi _k}\left( {{\omega _n} - {{\mathbf{1}}_N} \otimes {\omega _{{n_k}}}} \right),} \label{eq12} \\
    \dot V_n & = {\xi _{{v}}} \equiv - {\operatorname{diag} \left( {{c_{v_i}}} \right)} \sum\limits_{k \in \mathscr{L}} {\Phi _k\left( {V_n - {{\mathbf{1}}_N} \otimes {V_{n_k}}} \right),} \label{eq13}
\end{align}

where $\omega_n= {[ {\omega_{n_1}^T,...,\omega_{n_N}^T} ]^T}$ and $V_n= {[ {V_{n_1}^T,...,V_{n_N}^T} ]^T}$. Define the global frequency and voltage containment error vectors as
% {\small
\begin{align}
    {e_f} & = {\omega _n} - {\left( {\sum\limits_{k \in \mathscr{L}} {{\Phi _k}} } \right)^{ - 1}} \sum\limits_{k \in \mathscr{L}} {{\Phi _k}\left( {{{\mathbf{1}}_N} \otimes {\omega _{{n_k}}}} \right)} , \label{eq14} \\
    {e_v} & = V_n - {\left( {\sum\limits_{k \in \mathscr{L}} {{\Phi _k}} } \right)^{ - 1}} \sum\limits_{k \in \mathscr{L}} {{\Phi _k}\left( {{{\mathbf{1}}_N} \otimes V_{n_k}} \right)}. \label{eq15}
\end{align}
The following assumption is needed for the communication graph topology to guarantee cooperative consensus.
\begin{assumption}
\label{ass: leader follower}
There exists a directed path from at least one leader to each inverter.
\end{assumption}

\begin{lemma}[\cite{zuo2020resilient}]
\label{le: control objective} 
Suppose Assumption \ref{ass: leader follower} holds, $\sum\nolimits_{k \in \mathscr{L}} {\Phi _k} $ is non-singular and positive-definite. 
In the absence of attack, using the designed cooperative secondary control \eqref{eq9} and \eqref{eq11}, the frequency and voltage containment control objectives are achieved if $\mathop {\lim }\limits_{t \to \infty } e_f \left( t \right) = 0$ and $\mathop {\lim }\limits_{t \to \infty } e_v \left( t \right) = 0$, respectively.
\end{lemma}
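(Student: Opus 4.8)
The plan is to treat $\lim_{t\to\infty}e_f(t)=0$ and $\lim_{t\to\infty}e_v(t)=0$ as the given hypotheses and to show that they force every follower's frequency and voltage to asymptotically enter the convex hull spanned by the two leader reference values, which is exactly the containment objective. Everything reduces to proving that the fixed weighting operator $\left(\sum_{k\in\mathscr{L}}\Phi_k\right)^{-1}\sum_{k\in\mathscr{L}}\Phi_k(\mathbf{1}_N\otimes\,\cdot\,)$ appearing in \eqref{eq14}--\eqref{eq15} maps the leader values to a nonnegative convex combination (a row-stochastic map).

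First I would combine \eqref{eq14} with the assumed nonsingularity of $S:=\sum_{k\in\mathscr{L}}\Phi_k$ to restate $e_f\to 0$ as $\omega_n(t)\to\omega_n^\star$ with $\omega_n^\star=S^{-1}\sum_{k\in\mathscr{L}}\Phi_k(\mathbf{1}_N\otimes\omega_{n_k})$. Because $\mathcal{L}=\mathcal{D}-\mathcal{A}$ is a Laplacian, $\mathcal{L}\mathbf{1}_N=0$, so $\Phi_k\mathbf{1}_N=(\tfrac12\mathcal{L}+\mathcal{G}_k)\mathbf{1}_N=\mathcal{G}_k\mathbf{1}_N$ is simply the pinning-gain vector to leader $k$. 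This lets me write $\omega_n^\star=\sum_{k\in\mathscr{L}}W_k\,\omega_{n_k}$ with weight vectors $W_k:=S^{-1}\mathcal{G}_k\mathbf{1}_N$, exhibiting each follower value as a linear combination of the scalar leader references.

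The two defining properties of a convex combination are the crux. For the unit row-sum, I would use the total pinning matrix $\mathcal{G}:=\sum_{k\in\mathscr{L}}\mathcal{G}_k$ and the identity $S\mathbf{1}_N=(\mathcal{L}+\mathcal{G})\mathbf{1}_N=\mathcal{G}\mathbf{1}_N$ (again since $\mathcal{L}\mathbf{1}_N=0$) to obtain $\sum_{k\in\mathscr{L}}W_k=S^{-1}\mathcal{G}\mathbf{1}_N=S^{-1}S\mathbf{1}_N=\mathbf{1}_N$, so for each follower the weights sum to one. For nonnegativity, I would observe that $S=\mathcal{L}+\mathcal{G}$ is a Z-matrix---its off-diagonal entries come from $-\mathcal{A}\le 0$ and $\mathcal{G}$ only alters the diagonal---and combine this with the positive-definiteness assumed in the statement, which confines the spectrum of $S$ to the open right half-plane; a nonsingular Z-matrix whose eigenvalues have positive real part is an M-matrix, hence $S^{-1}\ge 0$ entrywise. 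With $g_{ik}\ge 0$ this gives $W_k\ge 0$. I expect this inverse-nonnegativity (M-matrix) step to be the main obstacle, since positive-definiteness by itself does not imply it: the combinatorial Z-matrix structure of $S$, available under Assumption~\ref{ass: leader follower}, must be invoked explicitly.

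Finally, with weights that are nonnegative and sum to one, each component $\omega_{n_i}^\star=\sum_{k\in\mathscr{L}}[W_k]_i\,\omega_{n_k}$ is a convex combination of the leader frequencies and thus lies in their convex hull $[\min_{k}\omega_{n_k},\max_{k}\omega_{n_k}]$; since $\omega_n(t)\to\omega_n^\star$, every follower frequency asymptotically enters this hull, establishing the frequency containment objective. The voltage claim follows verbatim by repeating the argument on \eqref{eq15} under $e_v\to 0$, with $\omega$ replaced by $V$ and $c_{f_i}$ by $c_{v_i}$; the weighting operator is identical, so both the unit row-sum and nonnegativity carry over unchanged, yielding $V_{n_i}\to$ the convex hull of the leader voltages and completing the proof.
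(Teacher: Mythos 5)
Your proposal is mathematically sound, but there is nothing in the paper to compare it against: Lemma~\ref{le: control objective} is imported by citation from \cite{zuo2020resilient}, and the paper gives no proof of it. What you wrote is, in effect, a reconstruction of the standard containment-control argument underlying the cited result, and each step checks out: $e_f\to 0$ reduces the claim to locating the limit $\omega_n^\star=S^{-1}\sum_{k\in\mathscr{L}}\Phi_k(\mathbf{1}_N\otimes\omega_{n_k})$; the identity $\mathcal{L}\mathbf{1}_N=0$ collapses the weights to $W_k=S^{-1}\mathcal{G}_k\mathbf{1}_N$ with $\sum_{k\in\mathscr{L}}W_k=\mathbf{1}_N$; and inverse-nonnegativity of $S=\mathcal{L}+\mathcal{G}$ gives $W_k\ge 0$, so each follower's limit is a convex combination of the two leader references, which is precisely the containment objective. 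Two remarks on the details. First, your M-matrix step is valid (a positively stable Z-matrix is a nonsingular M-matrix, and positive definiteness of $S$ does push its spectrum into the open right half-plane), but you could and arguably should bypass the positive-definiteness hypothesis: under Assumption~\ref{ass: leader follower} alone, $\mathcal{L}+\mathcal{G}$ is a nonsingular M-matrix by the standard reachability argument. This matters because the lemma's phrasing suggests that nonsingularity and positive-definiteness of $\sum_{k\in\mathscr{L}}\Phi_k$ are themselves conclusions drawn from Assumption~\ref{ass: leader follower}, not extra hypotheses, and your proof silently leaves that first claim unaddressed; moreover, for a general digraph positive definiteness (of the symmetric part) can actually fail even when the M-matrix property holds, so grounding the nonnegativity step in Assumption~\ref{ass: leader follower} is the more robust route. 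Second, a small misstatement: the Z-matrix structure of $S$ is not ``available under Assumption~\ref{ass: leader follower}''; it follows purely from the sign pattern of $\mathcal{L}$ and the diagonality of the pinning matrices, with Assumption~\ref{ass: leader follower} entering only through nonsingularity/positive stability.
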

\begin{figure}[t]
\centering
\includegraphics[width=2in]{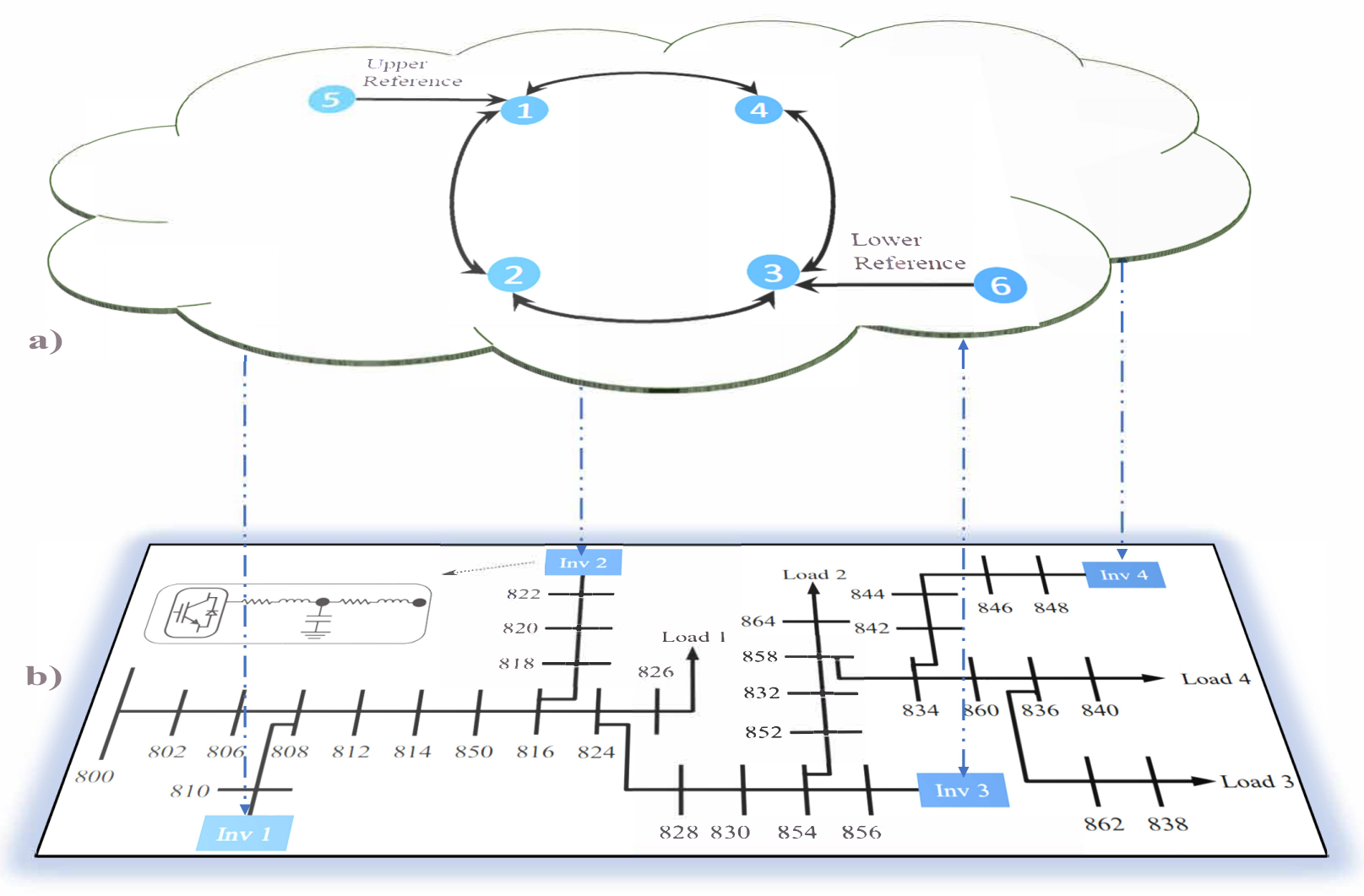}
%{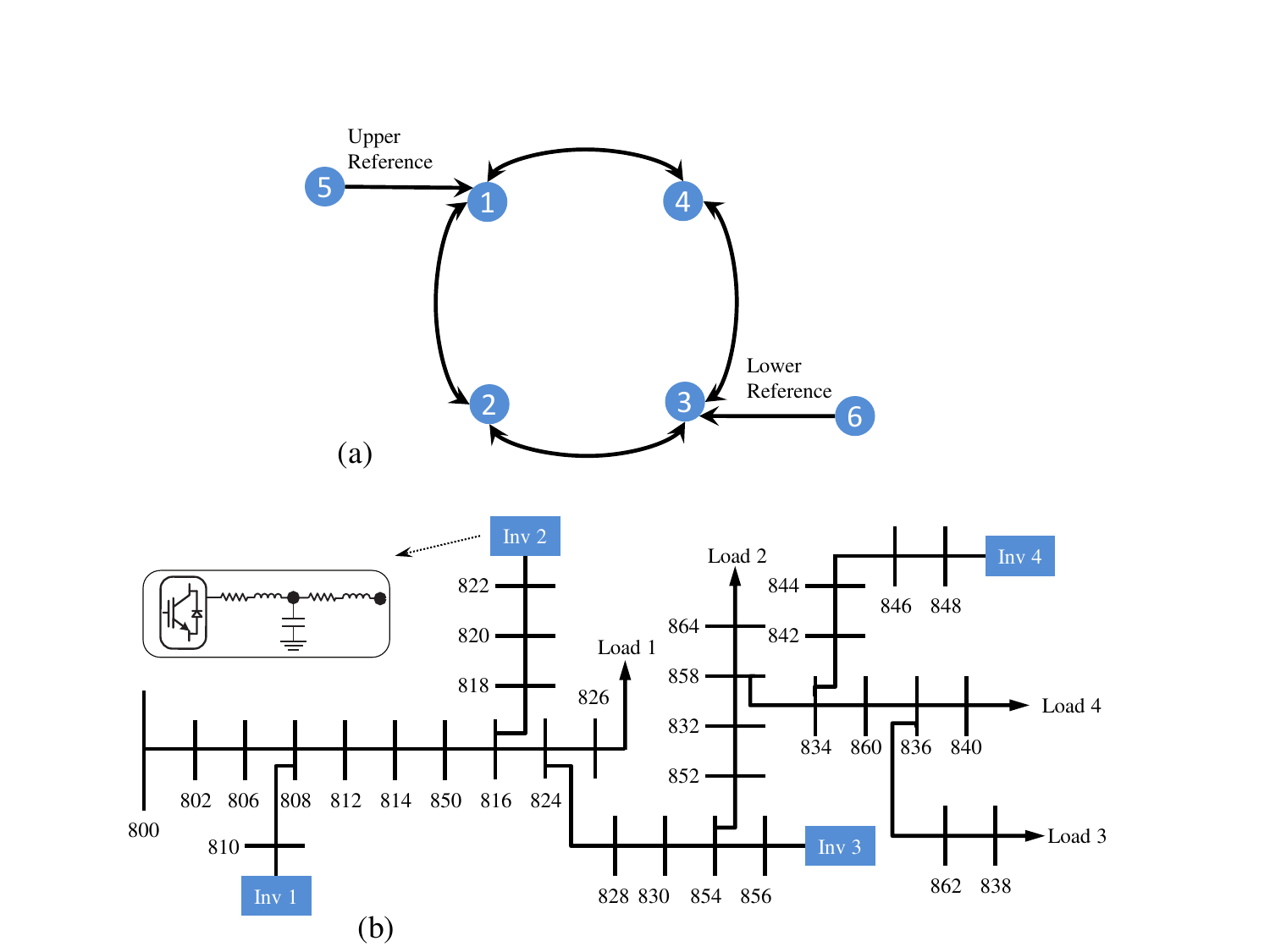}
\caption{Cyber-physical microgrid system: (a) Communication graph topology among four inverters and two leaders (references), (b) IEEE 34-bus system with four inverters.}
\label{FIG123}
\end{figure}
\vspace{-3mm}
\section{Problem Formulation}
In this section, we formulate the resilient defense problems for the secondary frequency and voltage control of an AC microgrid. Specifically, we introduce the EU-FDI attacks on the local control inputs of the frequency and voltage control loops, then auxiliary control input signal in \eqref{eq3} and \eqref{eq4}, becomes to: 
% {\small
\begin{align}
    \bar{u}_{f_i} & = u_{f_i} + \mu_{f_i} \quad \text{,} \quad \bar{u}_{v_i} = u_{v_i} + \mu_{v_i} \label{eq5+} 
\end{align}
where $\bar u_{f_i}$, and $\bar u_{v_i}$ denote the ith component of corrupted control signals received by actuators and $\mu_{f_i}$ and $\mu_{v_i}$ denote the unbounded attack signals injected to the input channels of frequency and voltage control loops at the $i^{th}$ inverter, respectively.
\begin{definition}
\label{def: exponentially unbounded signal}
A signal $\mu(t)$ is said to be exponentially unbounded if its norm grows at most exponentially with time, i.e., $\|\mu(t)\| \leqslant \exp(\kappa t)$, where $\kappa$ is a positive constant.
\end{definition}

\begin{assumption}
\label{ass: attacks}
$\mu_{f_i}(t)$ and $\mu_{v_i}(t)$ are exponentially unbounded signals, i.e.,
$\|{{\mu_{f_i}}} \|\le \gamma_i \exp(\rho_it) $ and $\|{{\mu_{v_i}}} \|\le \gamma_i \exp(\rho_it) $, where $\rho_i$ and $\gamma_i$ are positive constant.
\end{assumption}
\begin{remark}
\label{rem: compromised observer}
% In the secondary control mechanism \cite{zhou2023distributed,zuo2020resilient,liu2023resilient}, the control input $u_i$, which is the rate of change of $V_{n_i}$ over time ($\dot{V}_{n_i}$), is generated in a virtual layer before the saturation mechanism takes effect. Here, the virtual control input $u_i = \dot{V}_{n_i}$, meaning that if a fast-growing unbounded signal is injected into the virtual auxiliary control input, the rate of change of $V_{n_i}$ will also become fast-growing and unbounded over time, leading to potential system instability before the saturation mechanism takes action. The projection of cyberattack signals increasing at an exponential rate reflects a plausible risk in the quantum era, where limitations on the power of attack signals are increasingly less defined. Traditional defenses, designed for bounded disturbances, are inadequate against these types of attacks. Inspired by previous work on unbounded attacks \cite{zuo2020resilient,zhou2022resilient,liu2023resilient}, it is noted that the controller's operation in a virtual layer makes it vulnerable to adversaries injecting fast-growing time-varying signals, causing the control input to vary without bounds, compromising system stability.
In the secondary control mechanism, the control input $u_i = \dot{V}_{n_i}$ is generated in a virtual layer \cite{zhou2023distributed,zuo2020resilient,liu2023resilient}. If an unbounded, fast-growing signal is injected, the rate of change $\dot{V}_{n_i}$ can become uncontrollable before the saturation mechanism activates, leading to system instability. Inspired by previous work on unbounded attacks \cite{zuo2020resilient,zhou2022resilient,liu2023resilient}, this vulnerability is especially concerning in the quantum era, where exponentially increasing attack signals can bypass traditional defenses designed for bounded disturbances.
\end{remark}
Since $\mu_{f_i}$ and $\mu_{v_i}$ are unbounded, conventional cooperative control fails to regulate frequency and contain voltages within acceptable ranges. Attack-resilient strategies are needed to ensure frequency regulation, voltage containment, and closed-loop stability. The following convergence definition applies.
\begin{definition}[\cite{khalil2002control}]
\label{def: UUB}
Signal $x(t)$ is UUB with an ultimate bound $b$, if there exist positive constants $b$ and $c$, independent of ${t_0} \geq 0$, and for every $a \in \left( {0,c} \right)$, there exist $t_1 = t_1 \left( {a,b} \right) \geq 0$, independent of $t_0$, such that $\left\| {x\left( {{t_0}} \right)} \right\| \leq a\Rightarrow \left\| {x\left( t \right)} \right\| \leq b, \forall t \geq {t_0} + {t_1}$.
\end{definition}

% \textbf{\textit{Definition} 1 (\cite{khalil2002control}):} 
% Signal $x(t)$ is UUB with an ultimate bound $b$, if there exist positive constants $b$ and $c$, independent of ${t_0} \geq 0$, and for every $a \in \left( {0,c} \right)$, there exist $t_1 = t_1 \left( {a,b} \right) \geq 0$, independent of $t_0$, such that $\left\| {x\left( {{t_0}} \right)} \right\| \leq a\Rightarrow \left\| {x\left( t \right)} \right\| \leq b, \forall t \geq {t_0} + {t_1}$.

Now, we introduce the following attack-resilient defense problems for the secondary frequency and voltage control loops.
\begin{problem}[Attack-resilient Frequency Defense Problem]
\label{pro: Attack-resilient Frequency Defense Problem}
The aim is to design an input control signal $u_{f_i}$, as delineated in Eq. \eqref{eq3}, for each inverter, such that the global frequency containment error $e_f$, as specified in Eq. \eqref{eq14}, remains UUB in the face of broad range of unbounded attacks including EU-FDI attacks on the local frequency control loop.
\end{problem}

% \textbf{\textit{Definition} 2 \textit{(Attack-resilient Frequency Defense Problem)}:}
% The aim is to design an input control signal $u_{f_i}$, as delineated in equation \eqref{eq3}, for each inverter, such that the global frequency containment error $e_f$, as specified in equation \eqref{eq14}, remains UUB in the face of generally unbounded attacks on the local frequency control loop. 

\begin{problem}[Attack-resilient Voltage Defense Problem]
\label{pro: Attack-resilient Voltage Defense Problem}
The aim is to design an input control signal $u_{v_i}$, as delineated in Eq. \eqref{eq4}, for each inverter, such that the global voltage containment error $e_v$, as defined in Eq. \eqref{eq15}, remains UUB in the face of broad range of unbounded attacks including EU-FDI attacks on the local voltage control loop.
\end{problem}
% \textbf{\textit{Definition} 3 \textit{(Attack-resilient Voltage Defense Problem)}:}
% The aim is to design an input control signal $u_{v_i}$, as delineated in equation \eqref{eq4}, for each inverter, such that the global voltage containment error $e_v$, as defined in equation \eqref{eq15}, remains UUB in the face of generally unbounded attacks on the local voltage control loop. 
\section{Fully Distributed Attack-resilient Defense Strategies Design and Stability Analysis}
% \begin{figure}[!t]
% \centering
% \includegraphics[width=3.5in]{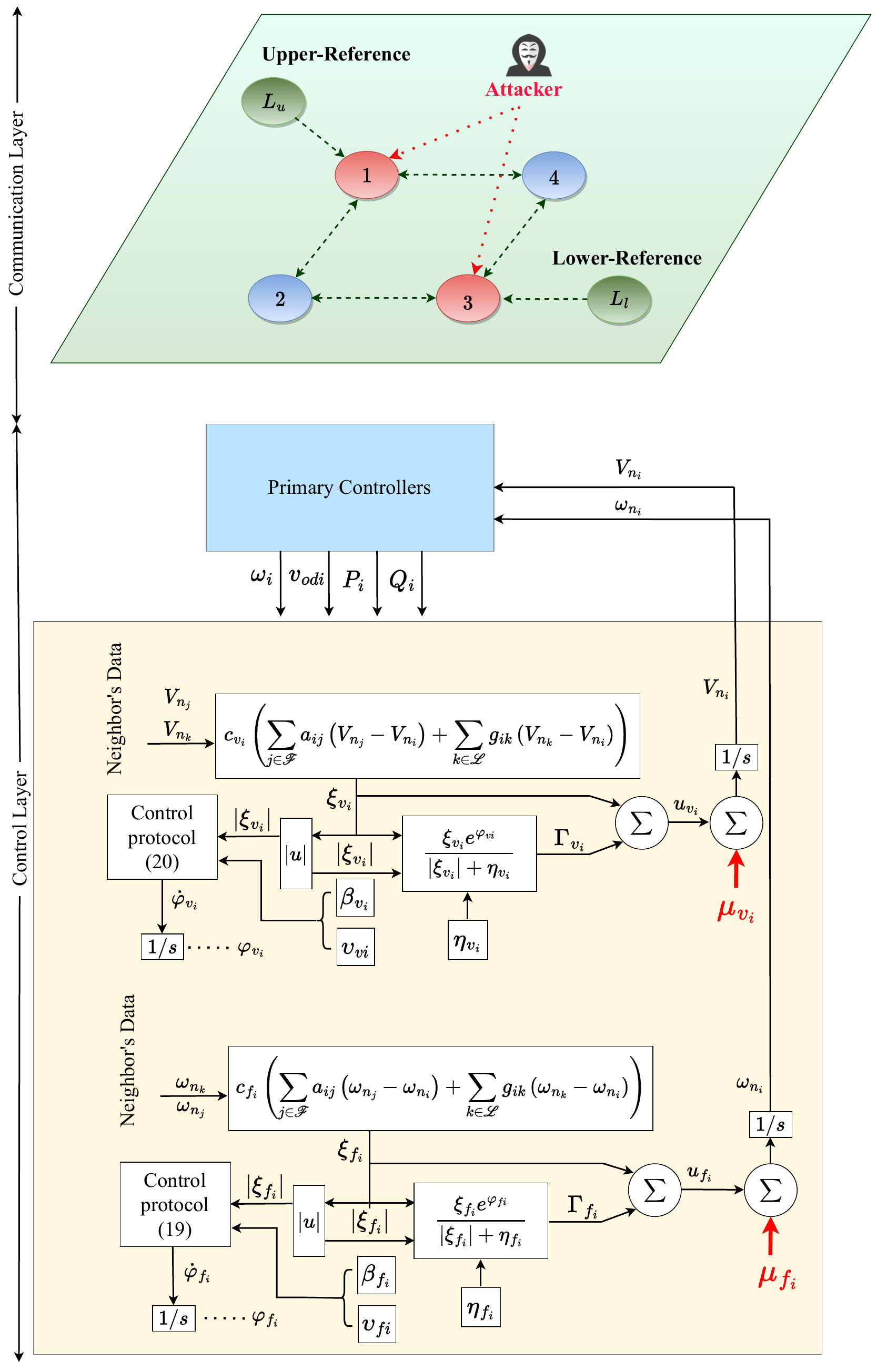}
% \caption{Communication layer among inverters, and the proposed attack-resilient secondary defense strategies for an inverter.}
% \label{FIG2}
% \end{figure}
We propose the following fully distributed attack-resilient defense strategies to solve the attack-resilient frequency and voltage defense problems.
% \begin{equation}
% \left\{ \begin{gathered}
%   u_{f_i} = {\xi _{f_i} + {\Gamma}_{f_i}}, \hfill \\
%   {{\Gamma}_{f_i}} = \frac{\xi_{f_i}e^{\varphi_{f_i}}}{\lvert\xi_{f_i}\rvert + \eta_{f_i}}, \hfill \\ 
%   {\dot\varphi_{f_i}} = \beta_{f_i}\Bigg(\lvert\xi_{f_i}\rvert - \upsilon_{fi}\bigg(\varphi_{f_i}-\hat\varphi_{f_i}\bigg)\Bigg), \hfill \\ 
%   \dot{\hat{\varphi}}_{f_i} = \kappa_{f_i}\bigg(\varphi_{f_i}-\hat\varphi_{f_i}\bigg), \hfill
% \end{gathered}  \right.    
% \label{eq20} 
% \end{equation}
% \begin{equation}
% \left\{ \begin{gathered}
%   u_{v_i} = {\xi _{v_i} + {\Gamma}_{v_i}}, \hfill \\
%   {{\Gamma}_{v_i}} = \frac{\xi_{v_i}e^{\varphi_{v_i}}}{\lvert\xi_{v_i}\rvert + \eta_{v_i}}, \hfill \\ 
% {\dot\varphi_{v_i}} = \beta_{v_i}\Bigg(\lvert\xi_{v_i}\rvert - \upsilon_{vi}\bigg(\varphi_{v_i}-\hat\varphi_{v_i}\bigg)\Bigg), \hfill \\ 
%   \dot{\hat{\varphi}}_{v_i} = \kappa_{v_i}\bigg(\varphi_{v_i}-\hat\varphi_{v_i}\bigg), \hfill
% \end{gathered}  \right.
% \label{eq21}
% \end{equation}
% {\small
\begin{equation}
\small
\left\{ \begin{aligned}
u_{f_i} &= \xi_{f_i} + \Gamma_{f_i} \\
\Gamma_{f_i} &= \frac{\xi_{f_i}e^{\varphi_{f_i}}}{|\xi_{f_i}| + \eta_{f_i}} \\
\dot{\varphi}_{f_i} &= \beta_{f_i}\left(|\xi_{f_i}| - \lambda_{fi} \right) \\ \lambda_{fi} &= \upsilon_{fi}(\varphi_{f_i}-\hat{\varphi}_{f_i})\\
\dot{\hat{\varphi}}_{f_i} &= \kappa_{f_i}(\varphi_{f_i}-\hat{\varphi}_{f_i})
\end{aligned} \right.
\left\{ \begin{aligned}
u_{v_i} &= \xi_{v_i} + \Gamma_{v_i} \\
\Gamma_{v_i} &= \frac{\xi_{v_i}e^{\varphi_{v_i}}}{|\xi_{v_i}| + \eta_{v_i}} \\
\dot{\varphi}_{v_i} &= \beta_{v_i}\left(|\xi_{v_i}| - \lambda_{fi} \right) \\ \lambda_{vi} &= \upsilon_{vi}(\varphi_{v_i}-\hat{\varphi}_{v_i}) \\
\dot{\hat{\varphi}}_{v_i} &= \kappa_{v_i}(\varphi_{v_i}-\hat{\varphi}_{v_i})
\end{aligned} \right. \label{eq20, eq21}
\end{equation}
where $\eta_{f_i}$ and $\eta_{v_i}$ are positive exponentially decaying functions, ${\Gamma}_{f_i}$ and ${\Gamma}_{v_i}$ are compensational signals, $\varphi_{f_i}$ and $\varphi_{v_i}$ are adaptively tuned parameters, the adaptation gains $\beta_{f_i}$ and $\beta_{v_i}$ are given positive constants. The initial values of both $\varphi_{f_i}$ and $\varphi_{v_i}$ are positive. 
\begin{theorem}
\label{thm: solve pro: Attack-resilient Frequency Defense Problem} 
Under Assumptions \ref{ass: leader follower}, and \ref{ass: attacks} given the implementation of the cooperative attack-resilient frequency defense strategies as delineated in equations \eqref{eq12} and \eqref{eq20, eq21}, the error $e_f$, defined in Eq. \eqref{eq14}, is UUB, i.e., the attack-resilient frequency defense problem is solved. Additionally, it is observed that by properly adjusting the value of ${\beta _{f_i}}$ as prescribed in Eq. \eqref{eq20, eq21}, the ultimate bound of $e_f$ is reduced to an arbitrarily small value.
\end{theorem}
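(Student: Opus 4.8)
The plan is to recast the attacked frequency loop as an error system in $e_f$ and build a Lyapunov function that couples the containment error with the adaptive parameters, then to show its derivative is negative outside a residual set whose size is governed by $\beta_{f_i}$. First I would derive the error dynamics: differentiating \eqref{eq14} and using that the leader setpoints $\omega_{n_k}$ are constant gives $\dot e_f=\dot\omega_n$. Substituting the attacked closed loop (the global form of $u_{f_i}=\xi_{f_i}+\Gamma_{f_i}$ corrupted by the injection, i.e. $\dot\omega_n=\xi_f+\Gamma_f+\mu_f$) together with the identity $\sum_{k\in\mathscr{L}}\Phi_k(\omega_n-\mathbf{1}_N\otimes\omega_{n_k})=\Phi e_f$, where $\Phi:=\sum_{k\in\mathscr{L}}\Phi_k$, yields $\dot e_f=-C_f\Phi e_f+\Gamma_f+\mu_f$ with $C_f:=\operatorname{diag}(c_{f_i})$. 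In particular $\xi_{f_i}=-c_{f_i}(\Phi e_f)_i$, so the local signal driving the adaptation is proportional to the $i$th containment-error channel.

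Next I would take $V=\tfrac12 e_f^\top\Phi e_f+\sum_i\big(\tfrac{1}{2\beta_{f_i}}\varphi_{f_i}^2+(\text{a filter term in }\varphi_{f_i}-\hat\varphi_{f_i})\big)$, positive definite since $\Phi\succ0$ by Lemma \ref{le: control objective}. Differentiating, the quadratic part produces the stabilizing term $-e_f^\top\Phi C_f\Phi e_f$, a compensation cross term $e_f^\top\Phi\Gamma_f$, the attack term $e_f^\top\Phi\mu_f$, and adaptation terms from $\dot\varphi_{f_i}$ and $\dot{\hat\varphi}_{f_i}$. The key structural observation is that, because $\Gamma_{f_i}$ carries the sign of $\xi_{f_i}=-c_{f_i}(\Phi e_f)_i$, the cross term is sign-definite, $e_f^\top\Phi\Gamma_f=-\sum_i|(\Phi e_f)_i|\,|\xi_{f_i}|e^{\varphi_{f_i}}/(|\xi_{f_i}|+\eta_{f_i})\le0$, and its magnitude grows like $e^{\varphi_{f_i}}$.

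The main obstacle is making this exponentially growing compensation dominate the exponentially unbounded attack. Using Assumption \ref{ass: attacks} I would bound $e_f^\top\Phi\mu_f\le\sum_i|(\Phi e_f)_i|\,\gamma_i e^{\rho_i t}$, so the two cross terms combine into $-\sum_i|(\Phi e_f)_i|\big(|\xi_{f_i}|e^{\varphi_{f_i}}/(|\xi_{f_i}|+\eta_{f_i})-\gamma_i e^{\rho_i t}\big)$, which is nonpositive once $e^{\varphi_{f_i}}$ overtakes $\gamma_i e^{\rho_i t}$ (recalling that $\eta_{f_i}$ decays). The hard part is therefore to show that the adaptive law $\dot\varphi_{f_i}=\beta_{f_i}(|\xi_{f_i}|-\lambda_{fi})$ forces $\varphi_{f_i}$ to grow at an asymptotic rate of at least $\rho_i$, so that after a finite time the compensation outruns the injection, while the leakage pair $(\lambda_{fi},\hat\varphi_{f_i})$ — with $\hat\varphi_{f_i}$ a low-pass copy of $\varphi_{f_i}$ — prevents over-adaptation and keeps the adaptation terms of $\dot V$ bounded once the error is small. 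This is precisely where the argument departs from standard UUB proofs, which presuppose bounded or at most polynomially growing disturbances.

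Finally, after completing the squares on the adaptation terms and absorbing the decaying $\eta_{f_i}$ and the finite catch-up transient into a constant residual $c$, I expect an inequality of the form $\dot V\le-\alpha V+c$; the comparison lemma then gives that $V$, and hence $e_f$, is UUB in the sense of Definition \ref{def: UUB}, solving Problem \ref{pro: Attack-resilient Frequency Defense Problem}. Inspecting $c$ shows it is driven by the transient gap between $e^{\varphi_{f_i}}$ and $\gamma_i e^{\rho_i t}$, which shrinks as $\beta_{f_i}$ is increased (faster adaptation), yielding the claimed arbitrarily small ultimate bound. The voltage statement, Problem \ref{pro: Attack-resilient Voltage Defense Problem}, follows verbatim with \eqref{eq13} and \eqref{eq15} in place of \eqref{eq12} and \eqref{eq14}.
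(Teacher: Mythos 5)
Your overall skeleton (a quadratic Lyapunov function, a sign-definite compensation cross term of size $e^{\varphi_{f_i}}$, and a domination argument against the exponential injection) matches the paper's strategy, and your error dynamics $\dot e_f=-C_f\Phi e_f+\Gamma_f+\mu_f$ with $\xi_{f_i}=-c_{f_i}(\Phi e_f)_i$ is just a constant invertible change of coordinates away from the paper's $\dot\xi_f$ equation. However, there are two genuine gaps. First, your augmented Lyapunov function $V=\tfrac12 e_f^\top\Phi e_f+\sum_i\tfrac{1}{2\beta_{f_i}}\varphi_{f_i}^2+\cdots$ cannot deliver the target inequality $\dot V\le-\alpha V+c$ with a constant $c$: for the compensation $e^{\varphi_{f_i}}$ to dominate an attack of size $\gamma_i e^{\rho_i t}$, the parameter $\varphi_{f_i}$ must grow without bound (roughly linearly in $t$), so $V$ itself diverges along every relevant trajectory, contradicting the boundedness that $\dot V\le-\alpha V+c$ would imply. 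The standard adaptive-control trick of putting the parameter in the Lyapunov function only works when the parameter (or its estimation error) is supposed to converge to something bounded; here it is not. The paper avoids this by taking $E=\tfrac12\xi_f^\top\bigl(\sum_{k\in\mathscr{L}}\Phi_k\bigr)^{-1}\xi_f$ with \emph{no} adaptation terms, and treating the $\varphi$-dynamics entirely outside the Lyapunov argument.

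Second, the step you explicitly flag as ``the hard part'' --- that $\dot\varphi_{f_i}=\beta_{f_i}(|\xi_{f_i}|-\lambda_{fi})$ forces $\varphi_{f_i}$ to outgrow $\ln|\mu_{f_i}|\sim\rho_i t$ after finite time --- is asserted but never argued, and it is precisely the content that distinguishes this result from the bounded-attack literature. The paper closes it as follows: it defines $\tilde\varphi_{f_i}=\varphi_{f_i}-\hat\varphi_{f_i}$, writes its linear dynamics $\dot{\tilde\varphi}_{f_i}=\beta_{f_i}|\xi_{f_i}|-(\beta_{f_i}\upsilon_{fi}+\kappa_{f_i})\tilde\varphi_{f_i}$, solves by variation of constants, and uses L'H\^opital's rule to show $\tilde\varphi_{f_i}$ is UUB with some bound $\psi$; then, since $\dot\varphi_{f_i}=\beta_{f_i}(|\xi_{f_i}|-\upsilon_{fi}\tilde\varphi_{f_i})$, the condition $|\xi_{f_i}|\ge\gamma_i\rho_i+\beta_{f_i}\psi$ guarantees $\dot\varphi_{f_i}\ge\frac{\mathrm{d}}{\mathrm{d}t}\ln|\mu_{f_i}|$, so $e^{\varphi_{f_i}}\ge|\mu_{f_i}|$ is maintained and $\dot E\le0$ outside that ball, giving UUB of $\xi_f$ and hence of $e_f=\bigl(\sum_{k\in\mathscr{L}}\Phi_k\bigr)^{-1}\xi_f$ by invertibility. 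You would need to supply an argument of this kind (or an equivalent one) before your proof is complete; as written, the conclusion $\dot V\le-\alpha V+c$ and the claim that $c$ shrinks with $\beta_{f_i}$ both rest on the unproven domination step.
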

\begin{proof}
See Appendix A for the proof of Theorem~\ref{thm: solve pro: Attack-resilient Frequency Defense Problem}.
\end{proof}
% \textbf{\textit{Theorem} 1:} Under Assumptions 1 and 2, and given the implementation of the cooperative attack-resilient voltage defense strategies as delineated in equations \eqref{eq18} and \eqref{eq20}, the error $e_f$, defined in equation \eqref{eq14}, is UUB, i.e., the attack-resilient frequency defense problem is solved. Additionally, it is observed that by properly adjusting the value of ${\beta _{f_i}}$ as prescribed in equation \eqref{eq20}, the ultimate bound of $e_f$ is reduced to an arbitrarily small value.
\begin{theorem}
\label{thm: solve pro: Attack-resilient Voltage Defense Problem}
Under Assumptions \ref{ass: leader follower}, and \ref{ass: attacks}, the cooperative attack-resilient voltage defense strategies described by Eqs. \eqref{eq13} and \eqref{eq20, eq21} ensure that $e_v$ in Eq. \eqref{eq15} is UUB, i.e., the attack-resilient voltage defense problem is solved. Moreover, by properly adjusting the adaptation gain ${\beta_{v_i}}$ in Eq. \eqref{eq20, eq21} the ultimate bound of $e_v$ is set arbitrarily small.
\end{theorem}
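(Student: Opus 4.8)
The plan is to leverage the exact structural parallel with Theorem~\ref{thm: solve pro: Attack-resilient Frequency Defense Problem}: equations \eqref{eq13}, \eqref{eq15} and the voltage branch of \eqref{eq20, eq21} are obtained from the frequency loop under the relabeling $f\mapsto v$, so the Appendix~A argument transfers once the voltage error dynamics are in hand. First I would differentiate \eqref{eq15}; since the leader set-points $V_{n_k}$ are constant, $\dot e_v=\dot V_n$, and substituting the corrupted actuation \eqref{eq5+} gives the global closed loop $\dot e_v=\xi_v+\Gamma_v+\mu_v$. Combining \eqref{eq13} with \eqref{eq15} yields the identity $\xi_v=-\operatorname{diag}(c_{v_i})\,\Phi\,e_v$ with $\Phi=\sum_{k\in\mathscr{L}}\Phi_k\succ0$ (Lemma~\ref{le: control objective}); because $\Phi$ is invertible, UUB of $e_v$ and UUB of $\xi_v$ are equivalent.

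The key simplification I would exploit is that the compensator is collinear with the nominal consensus signal: from \eqref{eq20, eq21}, $u_{v_i}=\xi_{v_i}+\Gamma_{v_i}=\big(1+\tfrac{e^{\varphi_{v_i}}}{|\xi_{v_i}|+\eta_{v_i}}\big)\xi_{v_i}$, so writing $g_i=1+e^{\varphi_{v_i}}/(|\xi_{v_i}|+\eta_{v_i})\ge 1$ the closed loop becomes
\begin{equation*}
\dot e_v=-\operatorname{diag}(c_{v_i}g_i)\,\Phi\,e_v+\mu_v .
\end{equation*}
Thus the resilient term acts as an exponentially large, adaptively tuned multiplier on the stabilizing feedback rather than as an independent forcing, which avoids any sign-alignment difficulty. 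I would then take the Lyapunov candidate $\mathcal V=\tfrac12 e_v^{\top}\Xi e_v+\sum_i\tfrac{1}{2\beta_{v_i}}(\varphi_{v_i}-\varphi_{v_i}^{*})^{2}+\sum_i\tfrac12(\varphi_{v_i}-\hat\varphi_{v_i})^{2}$, where $\Xi=\operatorname{diag}(\zeta_i)\succ0$ is the positive diagonal weight furnished by the M-matrix property of $\Phi$ (valid under Assumption~\ref{ass: leader follower}), $\varphi_{v_i}^{*}$ is the exponent needed for $e^{\varphi_{v_i}}$ to override the attack envelope, and the last block accounts for the filter state $\hat\varphi_{v_i}$.

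Differentiating, the feedback contributes $-e_v^{\top}\Xi\operatorname{diag}(c_{v_i}g_i)\Phi e_v$, whose symmetric part I expect to remain negative definite for all $t$ since $g_i\ge1$ only strengthens the feedback, while the attack contributes $e_v^{\top}\Xi\mu_v\le\|\Xi\|\,\|e_v\|\sum_i\gamma_i e^{\rho_i t}$ by Assumption~\ref{ass: attacks}. The crux is the adaptive law $\dot\varphi_{v_i}=\beta_{v_i}(|\xi_{v_i}|-\lambda_{vi})$: while the containment error is nonzero, $|\xi_{v_i}|>0$ drives $\varphi_{v_i}$ upward so that $g_i\sim e^{\varphi_{v_i}}$ grows and eventually dominates $\gamma_i e^{\rho_i t}$, making the dissipation outpace the exponentially unbounded disturbance, whereas the filtered leakage $\lambda_{vi}=\upsilon_{vi}(\varphi_{v_i}-\hat\varphi_{v_i})$ with $\dot{\hat\varphi}_{v_i}=\kappa_{v_i}(\varphi_{v_i}-\hat\varphi_{v_i})$ halts the growth once the error is suppressed, keeping the adaptive states bounded. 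Collecting the adaptation cross terms with the $\eta_{v_i}$-gap leaves $\dot{\mathcal V}\le-\alpha\|e_v\|^{2}+\delta$ outside a compact set; since a larger $\beta_{v_i}$ accelerates the growth of the compensating gain and shrinks the residual $\delta$, Definition~\ref{def: UUB} then gives UUB of $e_v$ with an ultimate bound that can be made arbitrarily small, establishing both assertions.

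I expect the main obstacle to be the simultaneous, two-sided control of the adaptively growing gain. One must prove that $e^{\varphi_{v_i}}$ keeps pace with the exponentially unbounded attack for \emph{all} $t$, even though $\varphi_{v_i}$ is driven only by the (ideally vanishing) residual $|\xi_{v_i}|$: the danger is that the error is suppressed so quickly that the adaptation starves before the gain catches the attack envelope. Dually, the leakage/filter pair must be tuned so that $\varphi_{v_i}$ stays bounded under mild attacks yet is left free to grow under aggressive ones, and the time-varying, state-dependent multipliers $g_i$ must be shown to preserve the M-matrix positivity uniformly in $t$. Verifying this delicate balance among $\beta_{v_i}$, $\upsilon_{vi}$, $\kappa_{v_i}$ and the decaying $\eta_{v_i}$ --- rather than the routine quadratic estimates --- is where the real effort lies.
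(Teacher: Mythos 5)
Your overall reduction is the right one and matches the paper: Theorem~\ref{thm: solve pro: Attack-resilient Voltage Defense Problem} is proved by repeating the Appendix~A argument under the relabeling $f\mapsto v$, using $\dot e_v=\dot V_n$ (constant leader setpoints), the identity $\xi_v=-\operatorname{diag}(c_{v_i})\bigl(\sum_{k\in\mathscr{L}}\Phi_k\bigr)e_v$, and the invertibility of $\sum_{k\in\mathscr{L}}\Phi_k$ to pass between UUB of $\xi_v$ and UUB of $e_v$. The collinearity observation $\xi_{v_i}+\Gamma_{v_i}=g_i\xi_{v_i}$ is also essentially the paper's Eq.~\eqref{eq25}, where the cross terms are combined into $|\xi_{v_i}|\frac{|\xi_{v_i}|(|\mu_{v_i}|-e^{\varphi_{v_i}})+|\mu_{v_i}|\eta_{v_i}}{|\xi_{v_i}|+\eta_{v_i}}$, which is nonpositive (up to the vanishing $|\mu_{v_i}|\eta_{v_i}$ term) once $e^{\varphi_{v_i}}\geqslant|\mu_{v_i}|$.

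However, your Lyapunov construction has a genuine gap precisely at the point you flag as "the real effort." You include a term $\sum_i\tfrac{1}{2\beta_{v_i}}(\varphi_{v_i}-\varphi_{v_i}^{*})^{2}$ with $\varphi_{v_i}^{*}$ "the exponent needed to override the attack envelope"; but under Assumption~\ref{ass: attacks} the envelope is $\gamma_i e^{\rho_i t}$, so no constant $\varphi_{v_i}^{*}$ exists --- you would need $\varphi_{v_i}^{*}(t)\approx\ln\gamma_i+\rho_i t$, and differentiating your candidate then produces an extra $-\beta_{v_i}^{-1}(\varphi_{v_i}-\varphi_{v_i}^{*})\dot\varphi_{v_i}^{*}$ term of indefinite sign that your sketch never absorbs. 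This is exactly why the paper does \emph{not} put the adaptation error into the Lyapunov function: it keeps $E=\tfrac12\xi^{T}\bigl(\sum_{k\in\mathscr{L}}\Phi_k\bigr)^{-1}\xi$ alone, treats $\tilde{\varphi}_{v_i}=\varphi_{v_i}-\hat\varphi_{v_i}$ separately by solving its linear ODE \eqref{eq1400} explicitly and applying L'H\^opital's rule to show $\tilde{\varphi}_{v_i}$ is UUB with bound $\psi$, and then shows that on the region $|\xi_{v_i}|\geqslant\gamma_i\rho_i+\beta_{v_i}\psi$ the adaptive law forces $\dot\varphi_{v_i}\geqslant\frac{\operatorname{d}}{\operatorname{d}t}\ln|\mu_{v_i}|$, so $e^{\varphi_{v_i}}$ overtakes $|\mu_{v_i}|$ after some finite $t_2$ and $\dot E\leqslant0$ there. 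Your "starvation" worry is thus resolved not by proving the gain tracks the attack for all states, but by only requiring dominance outside a ball in $\xi_v$, which is all Definition~\ref{def: UUB} needs. Without either that region-restricted argument or a correct handling of a time-varying $\varphi_{v_i}^{*}$, your proposed $\dot{\mathcal V}\leqslant-\alpha\|e_v\|^{2}+\delta$ is asserted rather than established.
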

\begin{proof} 
The approach used to prove Theorem \ref{thm: solve pro: Attack-resilient Voltage Defense Problem} mirrors that of Theorem \ref{thm: solve pro: Attack-resilient Frequency Defense Problem}. 
\end{proof}
% \hfill\(\blacksquare\)

% \textbf{\textit{Remark} 2:} 
% Compared to \cite{zuo2020resilient}, the proposed control protocols \eqref{eq18}-\eqref{eq21} have the following merits: (i) Local observers with additional communication information flow were constructed in \cite{zuo2020resilient} to estimate the actual state measurements. This, however, could introduce additional computational complexity. Moreover, the additional communication channels for exchanging observer states could potentially increase the system vulnerability to malicious cyber attacks; (ii) While both \cite{zuo2020resilient} and this letter preserve the UUB convergences for both frequency and voltage terms, in this letter, the ultimate bound can be reduced by properly increasing the adaptive tuning parameters. 
\vspace{-3mm}
\section{Case Studies}
\subsection{Simulation Results}

% {\color{red}
% {directly from \cite{lu2023concurrent}}
% To validate the proposed fully distributed attack-resilient secondary defense strategy in the isolated MG, real-time simulations of an IEEE 34-bus feeder system, islanded at bus 800 and incorporating four inverter-based DERs and two leaders (references), as shown in Fig. \ref{FIG3}. Modulation emulations are included in tests and implemented in the real-time simulation environment Opal-RT.
% }
%The proposed fully distributed attack-resilient secondary defense strategies are validated on an IEEE 34-bus feeder system, islanded at bus 800 and incorporating four inverter-based DERs and two leaders (references), as shown in Fig. \ref{FIG3}. 
The proposed distributed resilient control method is implemented on an IEEE 34-bus balanced test feeder upgraded with
four inverters, as illustrated in Figures \ref{FIG123}. This section presents two different case studies to show the effectiveness of the proposed resilient secondary
synchronization Strategy. Specifications of inverters and their grid-interconnections are adopted from \cite{bidram2014distributed}. All inverters have the same power ratings. The inverter droop gains are set as $m_{P_1}=m_{P_2}=9.4\times {10^{ - 5}}$, $m_{P_3}=m_{P_4}=18.8\times {10^{ - 5}}$, $n_{Q_1}=n_{Q_2}=1.3\times {10^{ - 3}}$, and $n_{Q_3}=n_{Q_4}=2.6\times {10^{ - 3}}$. The inverters communicate on a bidirectional communication network with the adjacency matrix of $\mathcal{A}=[0~1~0~1;1~0~1~0;0~1~0~1;1~0~1~0]$. The pinning gains are $g_{15}=g_{36}=1$. The frequency reference, upper voltage reference, and lower voltage reference are $60 \,\operatorname{Hz}$,  $350\,\operatorname{V}$, and  $330\,\operatorname{V}$, respectively. The performance of the resilient defense strategies defined in \eqref{eq20, eq21} is compared to the conventional secondary control method in \eqref{eq9} and \eqref{eq11}. For the conventional control, the gains are set as \(c_{f_i}=20\) and \(c_{v_i}=10\) for \(i=1,2,3,4\). The adaptation gains for the resilient strategies are \(\beta_{v_i}=20\) and \(\beta_{f_i}=350\). The parameters \(\eta_{v_i}\) and \(\eta_{f_i}\) are defined as \(e^{-\alpha_{v_i}}\) and \(e^{-\alpha_{f_i}}\), with \(\alpha_{v_i}=\alpha_{f_i}=0.01\). To demonstrate the effectiveness of the resilient secondary defense controllers against wide range of unbounded FDI attacks, we use the attack scenarios in Table \ref{table1}. For the conventional secondary control case study, we just considered the last column of unbounded attacks. Fig. \ref{FIG50} compare the voltage and frequency responses to these attacks for both strategies. Results show that, under the conventional approach, voltage and frequency diverge after the attack at \(t=5 \,\mathrm{s}\), leading to instability and improper power sharing. In contrast, the proposed resilient strategies stabilize voltages within \(330–350\,\mathrm{V}\), maintain frequency at \(60\,\mathrm{Hz}\), and ensure equal power sharing after transient fluctuations, despite various FDI attacks. These strategies achieve UUB convergence for frequency regulation, maintain voltage containment, and ensure stable operation of multi-inverter AC microgrids, even under a broad range of unbounded attacks, including EU-FDI attacks.
\begin{figure}[ht]
    \centering
    \begin{subfigure}[b]{0.22\textwidth} % Reduced width
        \centering
        \includegraphics[height=4.55cm]{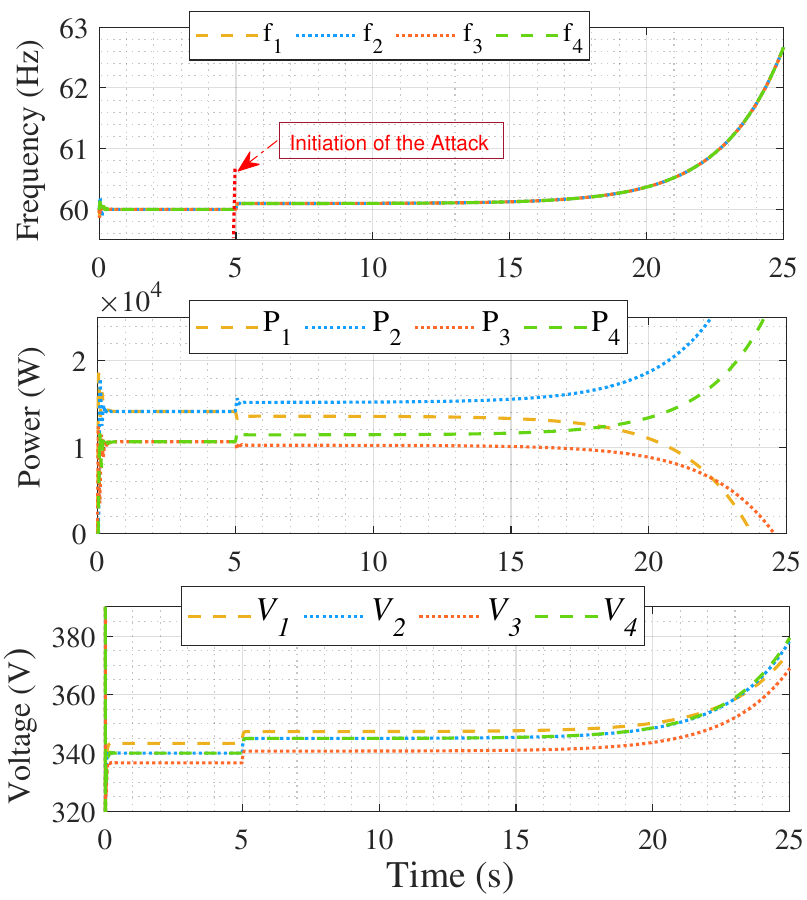}
        % \caption{}
        \label{fig:conventional}
    \end{subfigure}
    \hspace{0.01\textwidth} % Adjust space between figures as needed
    \begin{subfigure}[b]{0.22\textwidth} % Reduced width
        \centering
        \includegraphics[height=4.55cm]{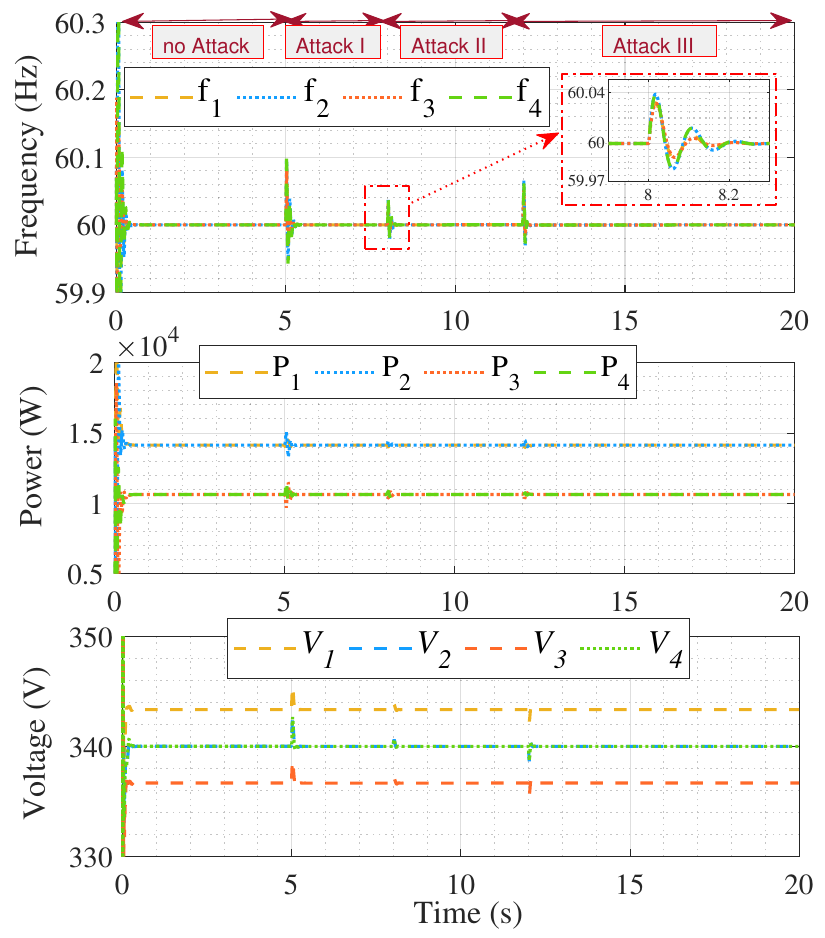}
        % \caption{}
        \label{fig:proposed}
    \end{subfigure}
    \caption{Comparative performance of the (left) conventional and (right) proposed attack-resilient control strategies under unbounded attacks.}
    \label{FIG50}
\end{figure}
\vspace{-5mm}
\subsection{Expremental Validation}
% To validate the efficacy of our proposed resilient control strategy, a AC microgrid system model composed of four DGs is constructed in OPAL-RT 5650. The operating time of the system is $t= 10 s$. The primary control and secondary control are implemented at $t = 0 s$. All the parameters of the controller are the same for experimental validation as well. We have considered the exponentially FDI attacks initiated at $t= 5s$. Upon the injection of the attack, the AC bus frequency and the AC main bus voltage quickly return to their reference values within 0.5 seconds, following minor fluctuations, as shown in Figs.~\ref{Opal_results}(a) and Figs.~\ref{Opal_results}(b). Additionally, the proportional allocation of DGs' active power is achieved within less than 0.5 seconds, as illustrated in Fig.~\ref{Opal_results}(c).
To validate the proposed resilient control strategy, an AC microgrid model with four DGs is constructed in OPAL-RT 5650. The system operates for \( t = 10 \, \text{s} \), with primary and secondary control initiated at \( t = 0 \, \text{s} \). Exponentially unbounded FDI attacks begin at \( t = 5 \, \text{s} \). After the attack, the AC bus frequency and main bus voltage quickly return to their reference values, with minor fluctuations as shown in Figs.~\ref{Opal_results}(a) and (b). Proportional allocation of DGs' active power is also achieved within 0.5 seconds, as shown in Fig.~\ref{Opal_results}(c).
% \begin{figure}
% \centering
% {\includegraphics[width=2in]{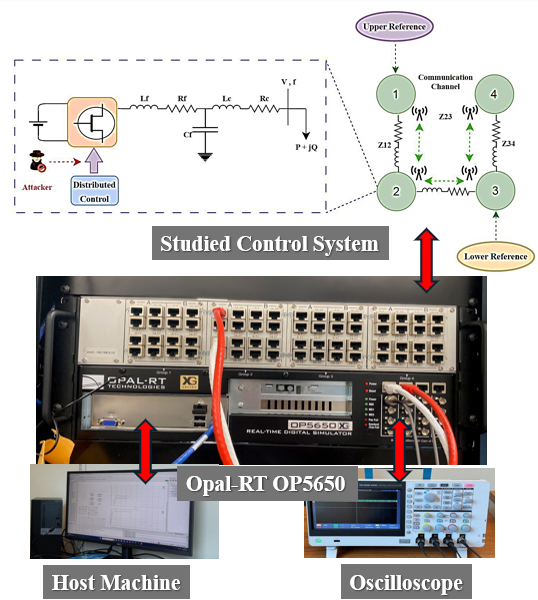}}
% \caption{Experimental tested system}
% \label{FIG_stru_Opal}
% \end{figure}
% \vspace{-10mm}
\begin{figure}[!t]
\centering
% Top figure
\begin{subfigure}[b]{0.4\textwidth}
    \centering
    \includegraphics[width=1.6in]{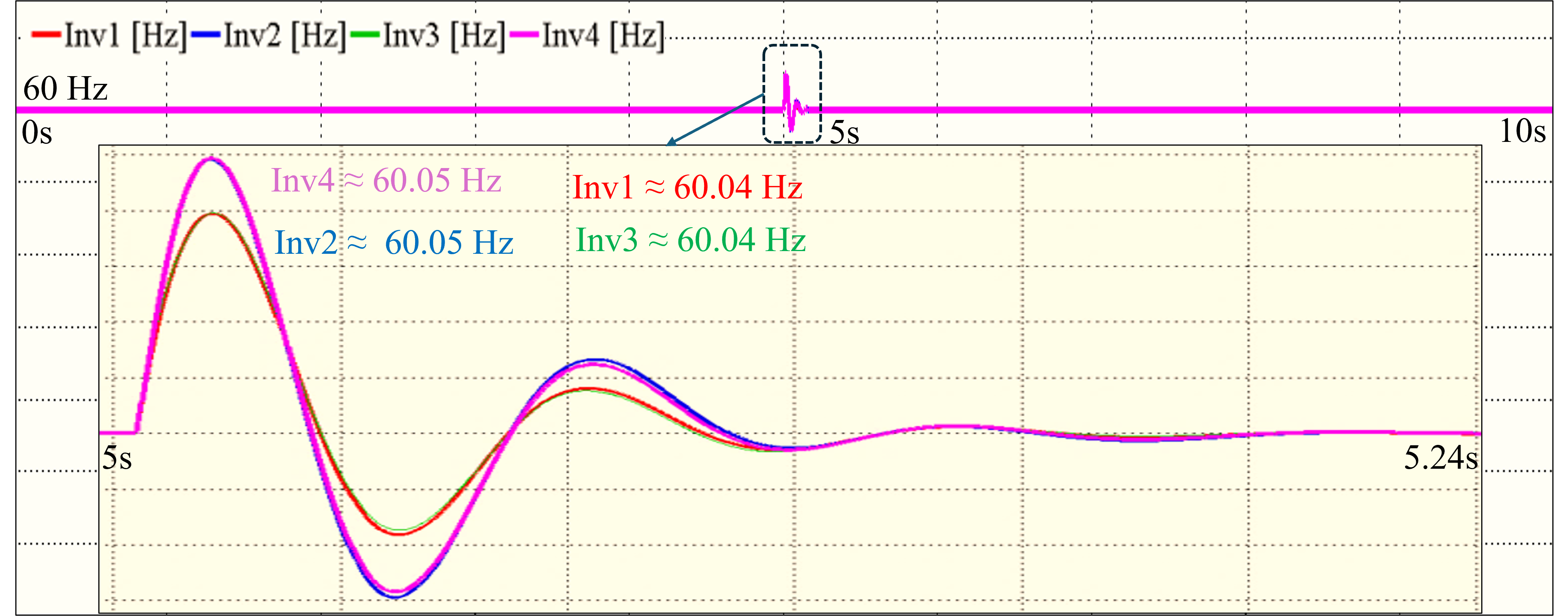}
    \caption{}
    \label{FIGF_Opal}
\end{subfigure}
\vspace{0.1cm} % Adjust the vertical spacing as needed

% Bottom two figures side by side
\begin{subfigure}[b]{0.23\textwidth}
    \centering
    \includegraphics[width=1.6in]{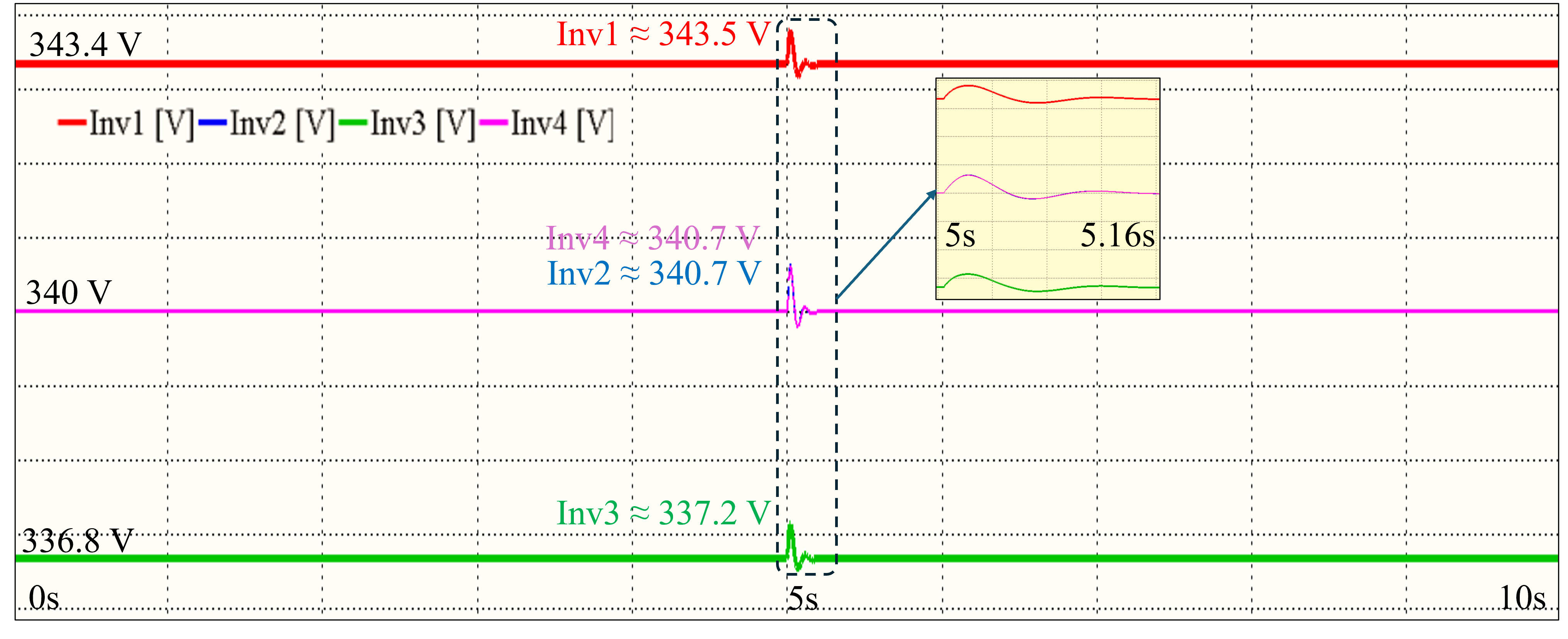}
    \caption{}
    \label{FIGP_Opal}
\end{subfigure}
\hfill
\begin{subfigure}[b]{0.23\textwidth}
    \centering
    \includegraphics[width=1.71in]{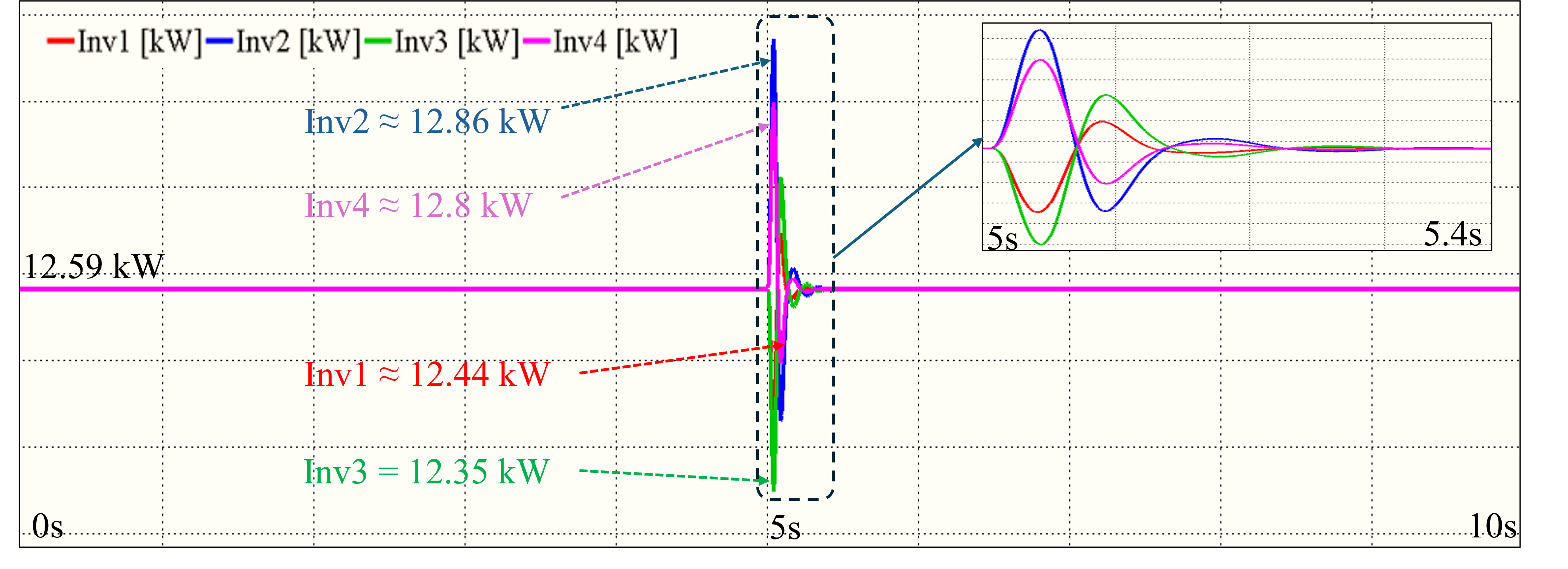}
    \caption{}
    \label{FIGV_Opal}
\end{subfigure}

\caption{Experimental results. (a) Frequency performance
 ,(b) Voltage Performance, (c) Active Power Sharing.}
\label{Opal_results}
\end{figure}
% In general, the resilient control method ensures stable operation of the AC microgrid with four DGs, even under simultaneous exponentially unbounded FDI attacks. Fast regulation and minimized oscillations are achieved for AC bus frequency, voltage, and DGs' active power.
The resilient control method ensures stable operation of the AC microgrid with four DGs, even under simultaneous exponentially unbounded FDI attacks, achieving fast regulation and minimal oscillations for AC bus frequency, voltage, and DGs' active power.
\vspace{-1.5mm}
\section{Conclusion}
This paper has presented novel secondary cyber-physical defense strategies for multi-inverter AC microgrids against broad range of unbounded attacks including exponentially unbounded attacks, on input channels of both frequency and voltage control loops. The proposed fully distributed cyber-physical defense strategies based on adaptive control techniques ensure the UUB stability of the closed-loop system by preserving the UUB consensus for frequency regulation and achieving voltage containment. Moreover, the ultimate bounds of convergence can be tuned by properly adjusting the adaptation gains, $\beta_{f_i}$ and $\beta_{v_i}$, in the adaptive tuning laws. The enhanced resilient performance of the proposed cyber-physical defense strategies has been verified using a modified IEEE 34-bus system. Finally, the effectiveness of the designed resilient distributed secondary control method is validated through simulation and real-time controller hardware-in-the-loop experiment using OPAL-RT.
\vspace{-3mm}
\begin{appendices}
\label{appendix: proofs}
\section{Proof}
\begin{proof}
Combining \eqref{eq9}, \eqref{eq12}, \eqref{eq5+}, and \eqref{eq20, eq21} yields the global form: 
\begin{align}
\small
\dot{\xi}_f& = - \left(\sum_{k \in \mathscr{L}} \Phi_k\right)\operatorname{diag}\left(c_{f_i}\right)\big(\xi_f +\mu_f +\Gamma_f\big),
\label{eq22}
\end{align}
where $\xi_f= [ \xi_{f_i}^T,...,\xi_{f_N}^T ]^T, \mu_f= [ \mu_{f_i}^T,...,\mu_{f_N}^T ]^T$ and $\Gamma_f= [ \Gamma_{f_i}^T,...,\Gamma_{f_N}^T ]^T$. Consider the following Lyapunov function candidate as \ref{eq23}. So its time derivative is \ref{eq24}.
% {\small
\begin{align}
\scriptsize
E & =\frac{1}{2} \xi_f^{T}\left(\sum_{k \in \mathscr{L}} \Phi_k\right)^{-1} \xi_f.
\label{eq23}
\end{align}
\vspace{-4.5mm}
% {\small
\begin{align}
\scriptsize
\dot{E} & =\frac{1}{2} \times 2 \xi_f^{T}\left(\sum_{k \in \mathscr{L}} \Phi_k\right)^{-1} \dot{\xi}_f \nonumber\\
& =-\xi_f^{T}\left(\sum_{k \in \mathscr{L}} \Phi_k\right)^{-1}\left(\sum_{k \in \mathscr{L}} \Phi_k\right) \operatorname{diag}\left(c_{f_i}\right)\big(\xi_f+\mu_f +\Gamma_f\big) \nonumber\\
& \leqslant-\sigma_{\min }\left(\operatorname{diag}\left(c_{f_i}\right)\right)\left\|\xi_f\right\|^2-\operatorname{diag}\left(c_{f_i}\right)\sum\limits_{i \in \mathscr{F}} \big(\xi_{f_i} \mu_{f_i}\big)\nonumber\\
&-\operatorname{diag}\left(c_{f_i}\right)\sum\limits_{i \in \mathscr{F}} \big(\xi_{f_i} \Gamma_{f_i}\big)
\nonumber\\
& \leqslant -\sigma_{\min }\left(\operatorname{diag}\left(c_{f_i}\right)\right)\left\|\xi_f\right\|^2 + \operatorname{diag}\left(c_{f_i}\right)\sum\limits_{i \in \mathscr{F}} \big|\xi_{f_i} \big|\big|\mu_{f_i}\big|\nonumber\\
&-\operatorname{diag}\left(c_{f_i}\right)\sum\limits_{i \in \mathscr{F}} \big(\xi_{f_i} \Gamma_{f_i}\big).
\label{eq24}
\end{align}
Upon substituting $\Gamma_{f_i}$ from Eq. \eqref{eq20, eq21}, the final two terms on the right-hand side of Eq. \eqref{eq24} are transformed as follows
% {\small
\begin{align}
\small
& \operatorname{diag}\left(c_{f_i}\right)\sum\limits_{i \in \mathscr{F}} \big|\xi_{f_i} \big|\big|\mu_{f_i}\big|-\operatorname{diag}\left(c_{f_i}\right)\sum\limits_{i \in \mathscr{F}} \big(\xi_{f_i} \Gamma_{f_i}\big) \nonumber\\
& =  \operatorname{diag}\left(c_{f_i}\right)\sum\limits_{i \in \mathscr{F}}\left(\big|\xi_{f_i}\big| \frac{\big|\xi_{f_i}\big|\big(\big|\mu_{f_i}\big| - e^{\varphi_{f_i}}\big) + \big|\mu_{f_i}\big|\eta_{f_i}}{\left|\xi_{f_i}\right|+\eta_{f_i}}\right)
\label{eq25}
\end{align}
\normalsize

Since $\eta_{f_i}=e^{-\alpha_{fi}t^2}$ is an exponentially decaying function, based on Assumption 2, $\lim_{{t \to \infty}}\big|\mu_{f_i}\big|\eta_{f_i} = 0$. 
To further simplify the mentioned inequality, from \eqref{eq20, eq21}, and if we have:
% {\small
\begin{align}
\small
& \varphi_{f_i} \geqslant \ln(\left|\mu_{f_i}\right|) \Rightarrow \lvert\xi_{f_i}\rvert - \beta_{f_i}\bigg(\varphi_{f_i}-\hat\varphi_{f_i}\bigg) \geqslant \frac{\frac{\operatorname{d}}{\operatorname{d}t}(\left|\mu_{f_i}\right|)}{\left|\mu_{f_i}\right|}
\label{eq2500}
\end{align}
Define $\tilde{\varphi}_{f_i}\left(t\right)=\varphi_{f_i}\left(t\right)-\hat\varphi_{f_i}\left(t\right)$, so the derivative of $\tilde{\varphi}_{f_i}\left(t\right)$ is 
\begin{equation}
\small
\begin{array}{l}
\dot{\tilde{\varphi}}_{f_i}\left(t\right)=\beta_{f_i}\Bigg(\lvert\xi_{f_i}\rvert - \upsilon_{fi}\bigg(\varphi_{f_i}-\hat\varphi_{f_i}\bigg)\Bigg)-\kappa_{f_i}\bigg(\varphi_{f_i}-\hat\varphi_{f_i}\bigg)\\
\quad\;\;\,\,=\beta_{fi}\lvert\xi_{f_i}\rvert-\left(\beta_{fi}\upsilon_i+\kappa_{f_i}\right)\tilde{\varphi}_{f_i}\left(t\right).
\end{array} 
\label{eq1400}
\end{equation}
The solution of \eqref{eq1400} can be written as
% {\small
\begin{equation}
\begin{split}
\small
\tilde{\varphi}_{f_i}\left(t\right)=&e^{-\left(\beta_{fi}\upsilon_i+\kappa_{f_i}\right)t}\tilde{\varphi}_{f_i}\left(0\right)\\
&+\alpha_{fi}\int_0^te^{-\left(\beta_{fi}\upsilon_i+\kappa_{f_i}\right)\left(t-\tau\right)}\lvert\xi_{f_i}\left(\tau\right)\rvert\operatorname{d}\tau.
\end{split}
\label{eq1500}
\end{equation}
Actually, $\tilde{\varphi}_{f_i}\left(t\right)$ will be UUB. This fact can be proved by considering the following two cases:
1) If $\alpha_{fi}\int_0^te^{-\left(\beta_{fi}\upsilon_i+\kappa_{f_i}\right)\left(t-\tau\right)}\lvert\xi_{f_i}\left(\tau\right)\rvert\operatorname{d}\tau$ is bounded, then clearly $\tilde{\varphi}_{f_i}\left(t\right)$ will be UUB as $t \rightarrow \infty$.
2) If $\lim\limits_{t \to \infty}\alpha_{fi}\int_0^te^{-\left(\beta_{fi}\upsilon_i+\kappa_{f_i}\right)\left(t-\tau\right)}\lvert\xi_{f_i}\left(\tau\right)\rvert\operatorname{d}\tau=\infty$, we can rewrite \eqref{eq1500} as follows:
% {\small
\begin{equation}
\small
\begin{split}
\tilde{\varphi}_{f_i}\left(t\right)=&e^{-\left(\beta_{fi}\upsilon_i+\kappa_{f_i}\right)t}\Bigg(\tilde{\varphi}_{f_i}\left(0\right)\\
&+\alpha_{fi}\int_0^te^{\left(\beta_{fi}\upsilon_i+\kappa_{f_i}\right)\left(\tau\right)}\lvert\xi_{f_i}\left(\tau\right)\rvert\operatorname{d}\tau \Bigg).
\end{split}
\end{equation} 
From L'Hôpital's rule, we can write:
\begin{equation}
\small
\begin{split}
&\lim_{t\to\infty}\frac{\int_0^te^{\left(\beta_{fi}\upsilon_i+\kappa_{f_i}\right)\left(\tau\right)}\lvert\xi_{f_i}\left(\tau\right)\rvert\operatorname{d}\tau}{e^{\left(\beta_{fi}\upsilon_i+\kappa_{f_i}\right)\left(t\right)}}\\
&=\lim_{t\to\infty}\frac{e^{\left(\beta_{fi}\upsilon_i+\kappa_{f_i}\right)\left(t\right)}\lvert\xi_{f_i}\left(t\right)\rvert}{\left(\beta_{fi}\upsilon_i+\kappa_{f_i}\right){e^{\left(\beta_{fi}\upsilon_i+\kappa_{f_i}\right)\left(t\right)}}}  = \lim_{t\to\infty}\frac{\lvert\xi_{f_i}\left(t\right)\rvert}{\left(\beta_{fi}\upsilon_i+\kappa_{f_i}\right)}
\end{split}
\end{equation}
Since $\lim_{t\to\infty}\lvert\xi_{f_i}\left(t\right)\rvert$ is UUB, we obtain that $\tilde{\varphi}_{f_i}\left(t\right)$ is also UUB. According to Definition \ref{def: UUB}, let the ultimate bound of $\tilde{\varphi}_{f_i}\left(t\right)$ to be $\psi$. 
Note that the initial values of the gains are chosen such that $\tilde{\varphi}_{f_i}\left(0\right) \ge0$. As a result, we can continue \eqref{eq2500} as follows:
\begin{align}
\scriptsize
& \lvert\xi_{f_i}\rvert - \beta_{f_i}\psi \geqslant \frac{\frac{\operatorname{d}}{\operatorname{d}t}(\left|\mu_{f_i}\right|)}{\left|\mu_{f_i}\right|}
\label{eq25001}
\end{align}
Hence based on Assumption 1, $\left|{{\mu_{f_i}}} \right|\le \gamma_i e^{\rho_it} $ and \eqref{eq25001}, pick $\lvert\xi_{f_i}\rvert \geqslant \gamma_i \rho_i +\beta_{f_i}\psi$ 
i.e., if we have \eqref{eq2500}, such that $e^{\varphi_{f_i}} \geqslant \left|\mu_{f_i}\right|$.
This suggests that $\exists t_2 > t_1$ such that 
\begin{align}
\scriptsize
&\operatorname{diag}\left(c_{f_i}\right)\sum\limits_{i \in \mathscr{F}} \big|\xi_{f_i} \big|\big|\mu_{f_i}\big|-\operatorname{diag}\left(c_{f_i}\right)\sum\limits_{i \in \mathscr{F}} \big(\xi_{f_i} \Gamma_{f_i}\big) \leqslant 0,\forall t \geqslant t_2.
\label{eq26}
\end{align}
Considering \eqref{eq24}, \eqref{eq25} and above equation yields
\begin{align}
\small
\dot E \leqslant 0,\; 
\forall \left|\xi_{f_i}\right| \geqslant\rho_i +\beta_{f_i}\psi ,\forall t\geqslant t_2.
\label{eq27}
\end{align}
Hence, $\xi_f$ is UUB. From Theorem 4.18 of \cite{khalil2002control}, while the system stability is maintained, the larger the value of the adaptation gain $\beta_{f_i}$, the smaller the ultimate bound. Note that $\xi_f = \sum\limits_{k \in \mathscr{L}}\Phi_k e_f$. Hence $e_f$ is also bounded.
\end{proof}
\end{appendices}
% \vspace{-5mm}
\begin{table}[!h]
\centering
\caption{Description of attack signals.}
\small % Change font size
\renewcommand{\arraystretch}{0.2} % Adjust row height
\setlength{\tabcolsep}{1.5pt} % Reduce column separation
\begin{tabular*}{\columnwidth}{@{\extracolsep{\fill}}ccccc@{}}
\toprule
& \multicolumn{4}{c}{\textbf{\small $time (s)$}} \\ \cmidrule(l){2-5} 
\textbf{\small $attacks$} & \textbf{0-5} & \textbf{5-8} & \textbf{8-12} & \textbf{12-20} \\
\midrule
\textbf{$\mu_{f_1}$} & $0$ & $0.5$ & $(0.15t)^3+0.7$ & $e^{0.25t}+0.8$ \\
\textbf{$\mu_{f_2}$} & $0$ & $0.5$ & $(0.25t)^3+0.6$ & $e^{0.2t}+1$ \\
\textbf{$\mu_{f_3}$} & $0$ & $0.23$ & $(0.35t)^3+0.3$ & $e^{0.15t}+1.4$ \\
\textbf{$\mu_{f_4}$} & $0$ & $0.6$ & $(0.15t)^3+0.7$ & $e^{0.3t}+0.8$ \\
\midrule
\textbf{$\mu_{v_1}$} & $0$ & $2$ & $(0.35t)^3+2.1$ & $e^{0.3t}+3.2$ \\
\textbf{$\mu_{v_2}$} & $0$ & $1$ & $(0.45t)^3+1$ & $e^{0.25t}+3.5$ \\
\textbf{$\mu_{v_3}$} & $0$ & $2$ & $(0.25t)^3+2.1$ & $e^{0.35t}+2.6$ \\
\textbf{$\mu_{v_4}$} & $0$ & $1.5$ & $(0.15t)^3+1.5$ & $e^{0.45t}+1.7$ \\
\bottomrule
\end{tabular*}
\label{table1}
\end{table}

\ifCLASSOPTIONcaptionsoff
  \newpage
\fi
\vspace{-3mm}

\bibliography{ref123}

\end{document}